\newcommand*{\bb}{\boldsymbol}
\newcommand{\Op}[1]{\ensuremath{{\mathcal{O}_p(#1)}}}
\newcommand{\vnorm}[1]{\ensuremath{{\left\| #1 \right\|}}}
\newcommand{\mnorm}[1]{{\left\vert\kern-0.25ex\left\vert\kern-0.25ex\left\vert #1 
		\right\vert\kern-0.25ex\right\vert\kern-0.25ex\right\vert}}
\newcommand{\mnorms}[1]{{\vert\kern-0.25ex\vert\kern-0.25ex\vert #1 
		\vert\kern-0.25ex\vert\kern-0.25ex\vert}}
\newtheoremstyle{example}% name
{3pt} %Space above
{3pt} %Space below
{} %Body font
{0\parindent} %Indent amount 1
{\bf}
\newtheoremstyle{theorem}% name
{3pt} %Space above
{3pt} %Space below
{\em} %Body font
{0\parindent} %Indent amount 1
{\bf}
\theoremstyle{example} 
\theoremstyle{theorem} \newtheorem{theorem}{Theorem}[section]
\def\det{{\mathop{\rm det}}}
\def\bbeta{\bb{\beta}}
\def\btheta{\bb{\theta}}
\def\bpsi{\bb{\psi}}
\def\bgamma{\bb{\gamma}}
\def\bSigma{\bb{\Sigma}}
\def\bgamma{\bb{\gamma}}
\def\by{\bb{y}}
\def\bC{\bb{C}}
\def\bY{\bb{Y}}
\def\bu{\bb{u}}
\def\bx{\bb{x}}
\def\bz{\bb{z}}
\def\b0{\bb{0}}
\def\bX{\bb{X}}
\def\bZ{\bb{Z}}
\def\bv{\bb{v}}
\def\bV{\bb{V}}
\def\bY{\bb{Y}}
\def\by{\bb{y}}
\def\bL{\bb{L}}
\def\btnod{\bb{\theta}_0}
\def\bttilde{\tilde{\bb{\theta}}}
\def\bW {\bb{W}}
\title{Maximum softly-penalized likelihood for mixed effects logistic regression}
\author{Philipp Sterzinger \\
	\texttt{philipp.sterzinger@warwick.ac.uk} \smallskip 
	\\ and \smallskip \\
	Ioannis Kosmidis \\ \texttt{ioannis.kosmidis@warwick.ac.uk} \bigskip \\
	Department of Statistics, University of Warwick\\
	Coventry CV4 7AL, UK
}
\begin{document}
	\maketitle
	%\tableofcontents
	
	\begin{abstract}
		Maximum likelihood estimation in logistic regression with mixed
		effects is known to often result in estimates on the boundary of the
		parameter space. Such estimates, which include infinite values for
		fixed effects and singular or infinite variance components, can
		cause havoc to numerical estimation procedures and inference. We
		introduce an appropriately scaled additive penalty to the
		log-likelihood function, or an approximation thereof, which
		penalizes the fixed effects by the Jeffreys' invariant prior for the
		model with no random effects and the variance components by a
		composition of negative Huber loss functions.  The resulting maximum
		penalized likelihood estimates are shown to lie in the interior of
		the parameter space. Appropriate scaling of the penalty guarantees
		that the penalization is soft enough to preserve the optimal
		asymptotic properties expected by the maximum likelihood estimator,
		namely consistency, asymptotic normality, and Cram\'{e}r-Rao
		efficiency. Our choice of penalties and scaling factor preserves
		equivariance of the fixed effects estimates under linear
		transformation of the model parameters, such as contrasts. Maximum
		softly-penalized likelihood is compared to competing approaches on
		two real-data examples, and through comprehensive simulation studies
		that illustrate its superior finite sample performance.
		\bigskip \\
		\noindent {Keywords: logistic regression, infinite estimates, singular variance components, data separation, Jeffreys' prior}
	\end{abstract}

	\section{Introduction}
	\label{sec:intro}
	Generalized Linear Mixed Models (GLMMs; \citealt[Chapter
	7]{mcculloch+etal:2008}) are a potent class of statistical models that
	can associate Gaussian and non-Gaussian responses, such as counts,
	proportions, positive responses, and so on, with covariates, while
	accounting for complex multivariate dependencies. This is achieved by
	linking the expectation of a response to a linear combination of
	covariates and parameters (fixed effects), and sources of extra
	variation (random effects) with known distributions. Among GLMMs,
	mixed effects logistic regression is arguably the most frequently used
	model to analyse binary response outcomes. Although these models
	find application in numerous fields such as biology, ecology and the
	social sciences \citep{bolker+etal:2009}, estimation of GLMMs is not
	straightforward in practice because their likelihood is generally an
	intractable integral.
	
	Maximum likelihood (ML) methods for GLMMs maximize the GLMM likelihood
	or an approximation thereof, which can in principle be made
	arbitrarily accurate \citep[see, for example,
	][]{raudenbush+etal:2000, pinheiro+chao:2006}. Such methods are
	pervasive in contemporary GLMM practice because the resulting
	estimators are consistent under general model regularity conditions,
	and the resulting estimates and the (approximate) likelihood can be
	used for likelihood-based inferential devices, such as
	likelihood-ratio or Wald tests, and model selection procedures based
	on information criteria. An alternative approach is to use Bayesian
	posterior update procedures \citep[see, for
	example,][]{zhao+etal:2006,browne+draper:2006}. However, such
	procedures come with various technical difficulties, such as
	determining the scaling of the covariates, selecting appropriate
	priors, coming up with efficient posterior sampling algorithms, and
	determining burn-in times of chains for reliable estimation. Yet
	another alternative are maximum penalized quasi-likelihood methods
	\citep{schall:1991, wolfinger+oconnel:1993, breslow+clayton:1993}
	which essentially fit a linear mixed model to transformed
	pseudo-responses. However, the penalized quasi-likelihood may not
	yield an accurate approximation of the GLMM likelihood. As a result,
	these estimators can have large bias when the random effects variances
	are large \citep{bolker+etal:2009,rodriguez+goldman1995} and are not
	necessarily consistent \citep[Chapter 3.1]{jiang:2017}.
	
	Despite the pervasiveness of ML methods in the statistical practice
	for GLMMs, certain data configurations can result in estimates of the
	variance-covariance matrix of the random effects distribution to be on
	the boundary of the parameter space, such as infinite or zero
	estimated variances, or, more generally, singular estimates of the
	variance-covariance matrix. In addition, as is the case in ML
	estimation for Bernoulli-response generalized linear models
	\citep[GLMs; see, for example][Chapter 4]{mccullagh+nelder:1989}, the
	ML estimates of Bernoulli-response GLMMs, such as the mixed effects
	logistic regression model, can be infinite. As is well-acknowledged in
	the GLMM literature \citep[see, for example][]{bolker+etal:2009,
		bolker:2015, pasch+etal:2013}, both instances of estimates on the boundary of the parameter space
	can cause havoc to numerical optimization procedures, and if such
	estimates go undetected, they may substantially impact first-order
	inferential procedures, like Wald tests, resulting in spuriously
	strong or weak conclusions; see \citet[Section~2.1]{chung+etal:2013}
	for an excellent discussion. In contrast to the numerous approaches to
	detect (see, for example, \citealt{kosmidis+schumacher:2021} for the
	\texttt{detectseparation} R \citep{R} package that implements the
	methods in \citealt{konis:2017}) and handle \citep[see, for
	example,][]{kosmidis+firth:2021, heinze+schemper:2002,
		gelman+etal:2008} infinite estimates in logistic regression with
	fixed effects, little methodology or guidance is available on how to
	detect or deal with degenerate estimates in logistic regression with
	mixed effects. For this reason, it is practically desirable to have
	access to methods that are guaranteed to return estimates away from
	the boundary of the parameter space, while preserving the key
	properties that the maximum likelihood estimator has.
	
	We introduce a maximum softly-penalized (approximate) likelihood
	(MSPL) procedure for mixed effects logistic regression that returns
	estimators that are guaranteed to take values in the interior of the
	parameter space, and are also consistent, asymptotically normally
	distributed, and Cram\'{e}r-Rao efficient under assumptions that are
	typically employed for establishing consistency, and asymptotic
	normality of ML estimators. The composite penalty we propose consists
	of appropriately scaled versions of Jeffreys' invariant prior for the
	model with no random effects to ensure the finiteness of the fixed
	effects estimates, and of compositions of the negative Huber loss
	functions to prevent variance components estimates on the boundary of
	the parameter space. We show that the MSPL estimates are guaranteed to
	be in the interior of the parameter space, and scale the penalty
	appropriately to guarantee that i) penalization is soft enough for the
	MSPL estimator to have the same optimal asymptotic properties expected
	by the ML estimator and ii) that the fixed effects estimates are
	equivariant under linear transformations of the model parameters, such
	as contrasts, in the sense that the MSPL estimates of linear
	transformations of the fixed effects parameters are the linear
	transformations of the MSPL estimates. Other prominent penalization
	procedures, for which open-source software implementations exist (for
	example, the bglmer routine of the \texttt{blme} R package; see
	\citep{chung+etal:2013,chung+etal:2015} for details) do not
	necessarily have these properties. Maximum softly-penalized likelihood
	is compared to prominent competing approaches through two real-data
	examples and comprehensive simulation studies that illustrate its
	superior finite-sample performance. Although the developments here are
	for logistic regression with mixed effects, they provide a blueprint
	for the construction of penalties and estimators of the fixed effects
	and/or the variance components with values in the interior of the
	parameter space for any GLMM and, more generally, for M-estimation
	settings where boundary estimates occur.
	
	The remainder of the paper is organized as follows. Section
	\ref{sec:bern_GLMMs} defines the mixed effects logistic regression
	model and Section~\ref{sec:culcita_dat} gives a motivating real-data
	example of degenerate ML estimates in mixed effect logistic
	regression. Section~\ref{sec:composite_penalty} introduces the
	proposed composite penalty, which gives non-boundary MSPL estimates
	(Section~\ref{sec:non_boundary}), is equivariant under linear
	transformations of fixed effects (Section~\ref{sec:invariance}) and
	achieves ML asymptotics (Section~\ref{sec:asymptotics}). Section
	\ref{sec:ci} demonstrates the performance of the MSPL on another
	real-data example for mixed effects logistic regression with bivariate
	random effect structure and presents the results of a simulation study
	based on the data set and Section~\ref{sec:sum} provides concluding
	remarks. Proofs are provided in the Appendix to this paper. Further
	material related to the examples and the simulations is given in the
	accompanying Supplementary Material document, along with additional
	simulation studies that illustrate the relative performance of MSPL to
	alternative methods in artificial mixed effects logistic regression
	settings with extreme fixed effects and variance components.
	
	\section{Mixed effects logistic regression}
	\label{sec:bern_GLMMs}
	
	Suppose that response vectors $\by_1, \ldots, \by_k$ are observed,
	where $\by_i = (y_{i1}, \ldots, y_{in_i})^\top \in \{0, 1\}^{n_i}$,
	possibly along with covariate matrices $\bV_1, \ldots, \bV_k$,
	respectively, where $\bV_i$ is a $n_i \times s$ matrix. A logistic
	regression model with mixed effects assumes that
	$\by_1, \ldots, \by_k$ are realizations of independent random vectors
	$\bY_1, \ldots, \bY_k$, where the entries $Y_{i1}, \ldots, Y_{in_i}$
	are independent Bernoulli random variables conditionally on a vector
	of random effects $\bu_i$ $(i = 1, \ldots, k)$. The vectors
	$\bu_1, \ldots, \bu_k$ are assumed to be independent draws from a
	multivariate normal distribution. The conditional mean of each
	Bernoulli random variable $Y_{ij}$ is associated with a linear
	predictor $\eta_{ij}$, which is a linear combination of covariates
	with fixed and random effects, through the logit link
	function. Specifically,
	\begin{align}
	\label{eq:bern_cluster}
	Y_{ij} \mid \bb{u}_i & \sim \text{Bernoulli}(\mu_{ij}) \quad \text{with} \quad
	\log\frac{\mu_{ij}}{1 - \mu_{ij}} = \eta_{ij} = \bx_{ij}^\top \bbeta + \bz_{ij}^\top \bu_i\\ \notag
	\bu_i & \sim \text{N}(\b0_q, \bb{\Sigma})  \quad (i = 1, \ldots, k; j = 1, \ldots, n_i)\,,
	\end{align}
	where $\mu_{ij} = P(Y_{ij} = 1 \mid \bu_i, \bx_{ij}, \bz_{ij})$.
	The vector $\bx_{ij}$ is the $j$th row of the $n_i \times p$ model
	matrix $\bX_{i}$ associated with the $p$-vector of fixed effects
	$\bbeta \in \Re^p$, and $\bz_{ij}$ is the $j$th row of the
	$n_i \times q$ model matrix $\bZ_{i}$ associated with the $q$-vector
	of random effects $\bu_i$. The model matrices $\bX_i$ and $\bZ_i$ are
	formed from subsets of columns of $\bV_i$, and the matrices $\bX$ and
	$\bV$ with row blocks $\bX_1, \ldots, \bX_n$ and
	$\bV_1, \ldots, \bV_n$ are assumed to be full rank. The
	variance-covariance matrix $\bSigma$ collects the variance components
	and is assumed to be symmetric and positive definite. The marginal
	likelihood about $\bb{\beta}$ and $\bb{\Sigma}$ for
	model~(\ref{eq:bern_cluster}) is
	\begin{equation}
	\label{eq:bern_likl}
	L(\bbeta,\bSigma) = (2\pi)^{-kq/2} \det(\bSigma)^{-k/2} \prod_{i=1}^{k}\int_{\Re^q}\prod_{j=1}^{n_i} \mu_{ij}^{y_{ij}}(1-\mu_{ij})^{1-y_{ij}} \exp\left\{-\frac{\bu_i^\top\bSigma^{-1}\bu_i}{2} \right\} d\bu_i\, .
	\end{equation}
	Formally, the ML estimator is the maximizer of~(\ref{eq:bern_likl})
	with respect to $\bbeta$ and $\bSigma$. However, (\ref{eq:bern_likl})
	involves intractable integrals, which are typically approximated
	before maximization. For example, the popular glmer routine of the R
	package \texttt{lme4} \citep{bates+etal:2015} uses adaptive
	Gauss-Hermite quadrature for one-dimensional random effects and
	Laplace approximation for higher-dimensional random effects. A
	detailed account of those approximation methods can be found in
	\citet{pinheiro+bates:1995}; see also \citet{liu+pierce:1994} for
	adaptive Gauss-Hermite quadrature rules.
	
	\section{Motivating example}
	\label{sec:culcita_dat}
	
	The following section provides a real-data working example, which is a reduced version of the data in \citet{mckeon+etal:2012} as provided in the worked examples of
	\citet{bolker:2015}, that motivates our developments in this paper.
	
	The data, which is given in Table~\ref{tab:culcita},
	comes from a randomized complete block design involving coral-eating sea stars novaeguineae (hereafter Culcita) attacking coral that harbor
	differing combinations of protective symbionts. There are four treatments, namely no symbionts, crabs only, shrimp only, both crabs and shrimp, and ten temporal blocks with two replications per block and treatment, which gives a total of 80 observations on whether predation was present
	(recorded as one) or not (recorded as zero). By mere inspection of the
	responses in Table~\ref{tab:culcita}, we note that predation becomes
	more prevalent with increasing block number and that predation gets
	suppressed when either crabs or shrimp are present, and more so when
	both symbionts are present. The only observation that deviates from
	this general trend is the observation in block 10 with no predation
	and no symbionts. 
	
	\begin{table}[t]
		\caption{Culcita data \citep{mckeon+etal:2012} from the worked
			examples of \citet{bolker:2015}. The data is available at
			\url{https://bbolker.github.io/mixedmodels-misc/ecostats\_chap.html}}
		\label{tab:culcita}
		\begin{center}
			\begin{tabular}{lllllllllll}
				\toprule
				& \multicolumn{10}{c}{Block} \\ \cmidrule{2-11}
				\multicolumn{1}{c}{Treatment} & \multicolumn{1}{c}{1} & \multicolumn{1}{c}{2} & \multicolumn{1}{c}{3} & \multicolumn{1}{c}{4} & \multicolumn{1}{c}{5} & \multicolumn{1}{c}{6} & \multicolumn{1}{c}{7} & \multicolumn{1}{c}{8} & \multicolumn{1}{c}{9} & \multicolumn{1}{c}{10} \\
				\midrule
				none & 0,1 & 1,1 & 1,1 & 1,1 & 1,1 & 1,1 & 1,1 & 1,1 & 1,1 & 1,0 \\
				crabs & 0,0 & 0,0 & 0,0 & 0,0 & 1,1 & 1,1 & 1,1 & 1,1 & 1,1 & 1,1 \\
				shrimp & 0,0 & 0,0 & 0,0 & 0,0 & 0,1 & 1,1 & 1,1 & 1,1 & 1,1 & 1,1 \\
				both & 0,0 & 0,0 & 0,0 & 0,0 & 0,0 & 0,1 & 1,1 & 1,1 & 1,1 & 1,1 \\
				\bottomrule
			\end{tabular}
		\end{center}
	\end{table}
	
	A logistic regression model with one random intercept per block can be
	used here to associate predation with treatment effects while
	accounting for heterogeneity between blocks, i.e.
	\begin{align}
	\label{eq:logistic_normal}
	Y_{ij} \mid u_i & \sim \text{Bernoulli}(\mu_{ij}) \quad  \text{with} \quad
	\log{\frac{\mu_{ij}}{1 - \mu_{ij}}} =  \eta_{ij} = \beta_0 + u_i + \beta_{a(j)} \\ \notag	\label{eq:logistic_normal2}
	u_i & \sim \text{N}(0, \sigma^2) \quad (i = 1, \ldots, 10; j = 1, \ldots, 8) \,,
	\end{align}
	where $a(j) = \lceil j/2 \rceil$ is the ceiling of $j/2$. In the above
	expressions, $(Y_{i1}$, $Y_{i2})^\top$, $(Y_{i3}, Y_{i4})^\top$,
	$(Y_{i5}, Y_{i6})^\top$, $(Y_{i7}, Y_{i8})^\top$ correspond to the two
	responses for each of ``none'', ``crabs'', ``shrimp'', and ``both'',
	respectively. We set $\beta_1 = 0$ for identifiability purposes,
	effectively using ``none'' as a reference category. The logarithm of
	the likelihood~(\ref{eq:bern_likl}) about the parameters
	$\bbeta = (\beta_0, \beta_2, \beta_3, \beta_4)^\top$ and
	$\psi = \log\sigma$ of model~(\ref{eq:logistic_normal}) is
	approximated by an adaptive quadrature rule as implemented in glmer
	with $Q = 100$ quadrature points, which is the maximum the current
	glmer implementation allows. All parameter estimates of
	model~(\ref{eq:logistic_normal}) reported in the current example are
	computed after removal of the atypical observation with zero predation
	in block 10 when there are no symbionts. This is also done in
	\citet[Section~13.5.6]{bolker:2015} and the corresponding worked
	examples, which are available at
	\url{https://bbolker.github.io/mixedmodels-misc/ecostats\_chap.html}. Estimates
	based on all data points are provided in Table~\ref{tab:culcita_supp}
	of the Supplementary Material document.
	
	The ML estimates of $\bbeta$ and $\psi$ in Table~\ref{tab:culcita_inf}
	are computed using the numerical optimization procedures BFGS and CG
	(ML[BFGS] and ML[CG], respectively), as these are provided by the
	\texttt{optimx} R package \citep[see][Section~3 for
	details]{nash:2014}, with the same starting values. The ML[BFGS] and
	ML[CG] estimates are different and notably extreme on the logistic
	scale. This is due to the two optimization procedures stopping early
	at different points in the parameter space after having prematurely
	declared convergence. The large estimated standard errors are
	indicative of an almost flat approximate log-likelihood around the
	estimates. In this case, the ML estimates of the fixed effects
	$\beta_0, \beta_2, \beta_3, \beta_4$ are in reality infinite in
	absolute value, which has also been observed in
	\citet[Section~13.5.6]{bolker:2015}.
	
	\begin{table}[t]
		\caption{ML, bglmer, and MSPL parameter estimates of
			\eqref{eq:logistic_normal} from the Culcita data in
			Table~\ref{tab:culcita} after removing the observation with zero
			predation in block 10 when there are no symbionts. The likelihood is approximated using a 100-point adaptive Gauss-Hermite quadrature rule. Parameter estimates are reported for the model with
			$\eta_{ij}$ as in~\eqref{eq:logistic_normal}, and with
			$\eta_{ij} = \gamma_0 + u_i + \gamma_j$ with $\gamma_4 =
			0$. Estimated standard errors (in parentheses) are based on the
			negative Hessian of the approximate log-likelihood}
		\label{tab:culcita_inf}
		\begin{center}
			\begin{tabular}{lD{.}{.}{3}D{.}{.}{3}D{.}{.}{3}D{.}{.}{3}D{.}{.}{3}}
				\toprule
				&
				\multicolumn{1}{c}{ML[BFGS]} & 
				\multicolumn{1}{c}{ML[CG]} &
				\multicolumn{1}{c}{bglmer[t]} &
				\multicolumn{1}{c}{bglmer[n]} & \multicolumn{1}{c}{MSPL}
				\\
				\midrule
				\multicolumn{6}{c}{reference category: ``none''} \\
				\midrule
				$\beta_0$ & 15.88 & 15.37 & 6.39 & 4.90 & 8.05\\
				& (10.14) & (9.50) & (2.60) & (2.08) & (3.21)\\
				$\beta_2$ & -12.93 & -12.46 & -4.02 & -2.84 & -6.90\\
				& (9.15) & (8.53) & (1.59) & (1.27) & (3.00)\\
				$\beta_3$ & -14.81 & -14.30 & -4.81 & -3.44 & -7.87\\
				& (9.89) & (9.24) & (1.73) & (1.35) & (3.26)\\
				$\beta_4$ & -17.71 & -17.15 & -6.47 & -4.73 & -9.64\\
				& (10.70) & (10.02) & (2.05) & (1.57) & (3.61)\\
				$\log \sigma$ & 2.31 & 2.28 & 1.72 & 1.54 & 1.72\\
				& (0.64) & (0.62) & (0.44) & (0.43) & (0.44)\\
				\midrule
				\multicolumn{6}{c}{reference category: ``both''} \\
				\midrule
				$\gamma_0$ & -1.82 & -1.74 & 0.37 & 0.57 & -1.59\\
				& (3.92) & (3.77) & (2.24) & (2.07) & (2.28)\\
				$\gamma_1$ & 17.74 & 17.09 & 6.70 & 5.75 & 9.63\\
				& (10.75) & (10.03) & (2.19) & (1.88) & (3.61)\\
				$\gamma_2$ & 4.78 & 4.65 & 1.63 & 1.26 & 2.74\\
				& (3.08) & (2.98) & (1.43) & (1.32) & (1.79)\\
				$\gamma_3$ & 2.89 & 2.83 & 0.83 & 0.56 & 1.77\\
				& (2.27) & (2.22) & (1.35) & (1.28) & (1.55)\\
				$\log \sigma$ & 2.31 & 2.28 & 1.74 & 1.66 & 1.72\\
				& (0.64) & (0.62) & (0.44) & (0.44) & (0.44)\\
				\bottomrule
			\end{tabular}
		\end{center}
	\end{table}

	Parameter estimates are also obtained using the bglmer routine of
	the \texttt{blme} R package \citep{chung+etal:2013} that has been
	developed to ensure that parameter estimates from GLMMs are away from
	the boundary of the parameter space. The estimates shown in
	Table~\ref{tab:culcita_inf} are obtained using a penalty for $\sigma$
	inspired by a gamma prior (default in bglmer; see
	\citealt{chung+etal:2013} for details) and two of the default prior
	specifications for the fixed effects: i) independent normal priors
	(bglmer[n]), and ii) independent t priors (bglmer[t]), as these are
	implemented in \texttt{blme}; see \texttt{bmerDist-class} in the help
	pages of \texttt{blme} for details. We also show the estimates
	obtained using the MSPL estimation method that we propose in the
	current work.
	
	The maximum penalized likelihood estimates from bglmer and the
	corresponding estimated standard errors appear to be
	finite. Nevertheless, the use of the default priors directly breaks
	parameterization equivariance under contrasts, which ML estimates
	enjoy. For example, Table~\ref{tab:culcita_inf} also shows the
	estimates of model~(\ref{eq:logistic_normal}) with
	$\eta_{ij} = \gamma_0 + u_i + \gamma_{j}$, where $\gamma_{4} = 0$,
	i.e.~setting ``both'' as a reference category. Hence, the identities
	$\gamma_0 = \beta_0 + \beta_4$, $\gamma_1 = -\beta_4$,
	$\gamma_2 = \beta_2 - \beta_4$, $\gamma_3 = \beta_3 - \beta_4$ hold,
	and it is natural to expect those identities from the estimates of
	$\bbeta$ and $\bgamma$. As is evident from
	Table~\ref{tab:culcita_inf}, the bglmer estimates with either normal
	or t priors can deviate substantially from those identities. For
	example, the bglmer estimate of $\gamma_1$ based on normal priors is
	$5.75$ while that for $\beta_4$ is $-4.73$, and the estimate of
	$\log\sigma$ is $1.54$ in the $\bbeta$ parameterization and $1.66$ in
	the $\bgamma$ parameterization. Furthermore, different contrasts give
	varying amounts of deviations from these identities. On the other
	hand, the approximate likelihood is invariant under monotone parameter
	transformations. As a result, the corresponding identities hold
	exactly for the ML estimates with the deviations observed in
	Table~\ref{tab:culcita_inf} being due to early stopping of the
	optimization routines. The bglmer estimates are typically closer to
	zero in absolute value than the ML estimates because the normal and t
	priors are all centered at zero. Note that the estimates using normal
	priors tend to shrink more towards zero than those using t priors
	because the latter have heavier tails.

	\begin{figure}[t]
		\begin{center}
			\includegraphics[width = 0.8\textwidth]{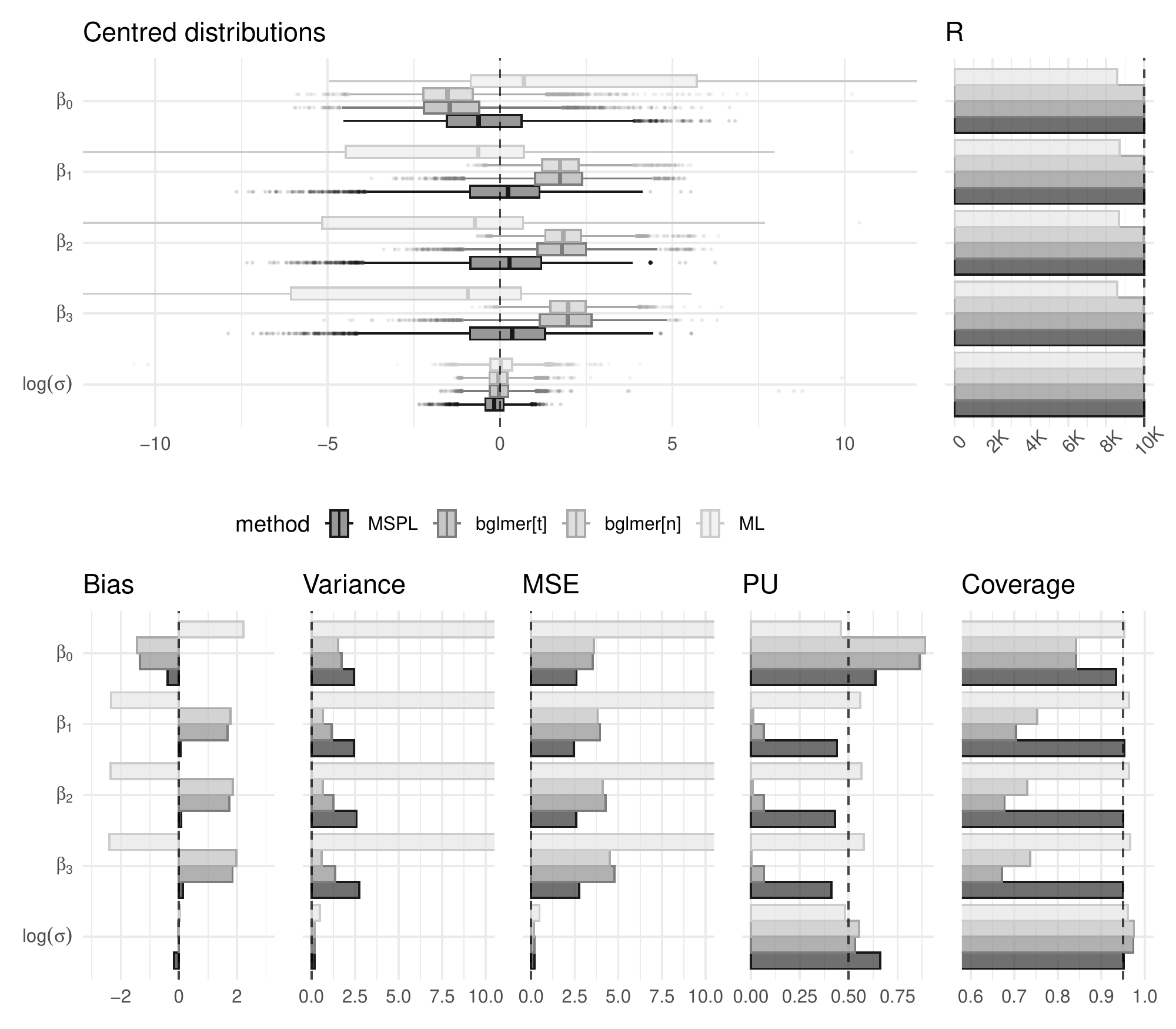}
		\end{center}
		\caption{Simulation results based on $10\ 000$ samples from
			\eqref{eq:logistic_normal} at the ML estimates in
			Table~\ref{tab:culcita_supp} in the Supplementary Material
			document. Estimates are obtained using ML, bglmer, and MSPL, with
			a 100-point adaptive Gauss-Hermite quadrature approximation to the
			likelihood. Parameter estimates on the boundary are discarded for
			the calculation of the simulation summaries. The number of
			estimates used for the calculation of summaries is given in the
			top right panel (R). The top left panel shows the centered
			sampling distribution of the estimators for MSPL, bglmer and
			ML. The bottom panels give simulation-based estimates of the bias,
			variance, mean squared error (MSE), probability of underestimation
			(PU), and coverage based on 95\% Wald-confidence intervals}
		\label{fig:culcita_simu0}
	\end{figure}

	In order to assess the impact of shrinkage on the frequentist
	properties of the estimators, we simulate $10\ 000$ independent
	samples of responses for the randomized complete block design in
	Table~\ref{tab:culcita}, at the ML estimates in the $\bbeta$
	parameterization, when all data points are used (see
	Table~\ref{tab:culcita_supp} of the Supplementary Material
	document). For each sample, we compute the ML and MSPL estimates, as
	well as the bglmer estimates based on normal and t priors.
	Figure~\ref{fig:culcita_simu0} shows boxplots for the sampling
	distributions of the estimators, centered at the true values, the
	estimated finite-sample bias, variance, mean squared error, and
	probability of underestimation for each estimator, along with the
	estimated coverage of 95\% Wald confidence intervals based on the
	estimates and estimated standard errors from the negative Hessian of
	the approximate log-likelihood at the estimates. The plotting range
	for the support of the distributions has been restricted to
	$(-11, 11)$, which does not contain all ML estimates in the simulation
	study but contains all estimates for the other methods. Note that
	apart from the estimated probability of underestimation, estimates for
	the other summaries are not well-defined for ML, because the
	probability of boundary estimates is positive. In fact, there were
	issues with at least one of the ML estimates for $9.25\%$ of the
	simulated samples. These issues are either due to convergence failures
	or because the estimates or estimated standard errors have been found
	to be atypically large in absolute value. The displayed summaries for
	ML are computed based only on estimates which have not been found to
	be problematic. Clearly, the amount of shrinkage induced by the normal
	and t priors is excessive. Although the resulting estimators have
	small finite-sample variance (with the one based on normal priors
	having the smallest), they have excessive finite-sample bias, which is
	often at the order of the standard deviation.  The combination of
	small variance and large bias results in large mean squared errors,
	and the sampling distributions to be located far from the respective
	true values, impacting first-order inferences; Wald-type confidence
	intervals about the fixed effects are found to systematically
	undercover the true parameter value. Finally, neither bglmer[n] nor
	bglmer[t] appear to prevent extreme positive variance estimates.

	As is apparent from Table~\ref{tab:culcita_inf}, the MSPL estimates
	are equivariant under contrasts. The identities between the
	$\bbeta$ and $\bgamma$ parameterization of the model hold
	exactly for the proposed MSPL estimates, where the small observed 
	deviations are attributed to rounding and the stopping criteria used
	for the numerical optimization of the penalized
	log-likelihood. Furthermore, we see from Figure~\ref{fig:culcita_simu0} 
	that the penalty we propose not only ensures that estimates are
	away from the boundary of the parameter space, but its soft nature
	leads to estimators that possess the good frequentist properties that
	would be expected by the ML estimator had it not taken boundary
	values.
	
	\section{Composite penalty}\label{sec:composite_penalty}
	We define a penalty for the log-likelihood or an approximation thereof
	for mixed effects logistic regression that returns MPL estimators that
	are always in the interior of the parameter space and are equivariant
	under scaled contrasts of the fixed effects. The penalty is
	appropriately scaled to be soft enough to return MPL estimators that
	are consistent and asymptotically normally distributed. For this
	reason, the resulting MPL estimators are termed maximum
	softly-penalized likelihood (MSPL) estimators.
	
	Let $\btheta = (\bbeta^\top, \bpsi^\top)^\top$ and
	$\ell(\btheta) = \log L(\bbeta, s(\bpsi))$ with $s(\bpsi) = \bSigma$,
	where $L(\bbeta, s(\bpsi))$ is~\eqref{eq:bern_likl}. For clarity of
	presentation, we shall write $\ell(\btheta)$ to denote both the
	log-likelihood or an appropriate approximation of the log-likelihood
	that is bounded from above. Sufficient conditions for consistency and
	asymptotic normality of the MSPL estimator using the exact likelihood
	and an approximation thereof are provided in Section~\ref{sec:softpen}
	of the Appendix. The parameter vector $\bpsi$ is defined as
	$\bpsi = (\log l_{11}, \ldots, \log l_{qq}, l_{21}, \ldots, l_{q1},
	l_{32}, \ldots, l_{q2}, \ldots, l_{qq-1})^\top$, where $l_{ij}$
	$(i > j)$ is the $(i,j)$th element of the lower-triangular Cholesky
	factor $\bL$ of $\bSigma$, i.e.  $\bSigma = \bL\bL^\top$. Consider the
	estimator
	\[
	\tilde{\btheta} = \arg\max_{\btheta \in \ \Theta} \left\{\ell(\btheta) + c_1 P_{(f)}(\bbeta) + c_2 P_{(v)}(\bpsi) \right\}\, ,
	\]
	where $c_{1} > 0$, $c_{2} > 0$, and $P_{(f)}(\bbeta)$ and
	$P_{(v)}(\bpsi)$ are unscaled penalty functions for the fixed effects
	and variance components, respectively. 
	
	For the unscaled fixed effects penalty, we use the logarithm of
	Jeffreys' prior for the corresponding GLM, that is
	\begin{equation}
	\label{eq:jeffreys_pen}
	P_{(f)}(\bbeta) = \frac{1}{2}\log \det\left(\sum_{i = 1}^k \bX^\top_i \bW_i \bX_i\right)\,,
	\end{equation}
	where $\bX_i$ collects the covariates for the fixed effects in
	model~(\ref{eq:bern_cluster}), and $\bW_i$ is a diagonal matrix with
	$j$th diagonal element $\mu_{ij}^{(f)} (1 - \mu_{ij}^{(f)})$ with
	$\mu_{ij}^{(f)} = \exp(\eta_{ij}^{(f)}) / \{1 +
	\exp(\eta_{ij}^{(f)})\}$ and $\eta_{ij}^{(f)} = \bx_{ij}^\top
	\bbeta$. For the variance components penalty, we use a composition of
	negative Huber loss functions on the components of $\bpsi$. In
	particular,
	\begin{equation}
	\label{eq:huber_pen}
	P_{(v)}(\bpsi) = \sum_{i = 1}^q D(\log l_{ii}) + \sum_{i > j} D(l_{ij}) \, ,
	\end{equation}
	where
	\[
	D(x) = \begin{cases}
	-\frac{1}{2} x^2, & \text{if } |x|\leq 1 \\ 
	- |x| + \frac{1}{2}, & \text{otherwise}           
	\end{cases} \,.
	\]
	
	\section{Non-boundary MPL estimates}
	\label{sec:non_boundary}
	
	Denote by $\partial \Theta$ the boundary of $\Theta$ and let
	$\btheta(r)$, $r \in \Re$, be a path in the parameter space such that
	$\lim_{r \to \infty}\btheta(r) \in \partial \Theta$. A common approach
	to resolving issues with ML estimates being in $\partial \Theta$, like
	those encountered in the example of Section~\ref{sec:culcita_dat}, is
	to introduce an additive penalty to the (approximate) log-likelihood
	that satisfies $\lim_{r \to \infty} P(\btheta(r)) = -\infty$.  Hence,
	if $\ell(\btheta)$ is bounded from above and there is at least one
	point $\btheta \in \Theta$ such that
	$\ell(\btheta) + P(\btheta) > -\infty$, then $\tilde\btheta$ is in the
	interior of $\Theta$.
	
	\citet[Theorem 1]{kosmidis+firth:2021} show that if the matrix $\bX$
	with row blocks $\bX_1, \ldots, \bX_n$ is full rank, then the limit
	of~(\ref{eq:jeffreys_pen}) is $-\infty$ as $\bbeta$ diverges to any point
	with at least one infinite component. This result holds for a range of
	link functions including the commonly-used logit, probit,
	complementary log-log, log-log, and the cauchit link
	\citep[see][Section~3.1, for details]{kosmidis+firth:2021}. Now,
	noting that~\eqref{eq:bern_likl} is always bounded from above by one
	as a probability mass function, the penalized log-likelihood
	$\ell(\btheta) + P(\btheta)$ diverges to $-\infty$ as $\bbeta$
	diverges, for any value of $\bpsi$. Hence, the MPL estimates for the
	fixed effects always have finite components as long as there is at
	least one point in $\Theta$ such that $\ell(\btheta)$ is not $-\infty$.
	
	The penalty~\eqref{eq:huber_pen} on the variance
	components takes value $-\infty$ whenever at least one component of
	$\bpsi$ diverges.  Hence, by parallel arguments to those in the
	previous paragraph, the penalized log-likelihood
	$\ell(\btheta) + P(\btheta)$ diverges to $-\infty$ as any
	component of $\bpsi$ diverges, for any value of $\bbeta$. Hence, the
	MPL estimates for $\bpsi$ have finite components and the value of
	$\tilde\bSigma = s(\tilde{\bpsi})$ is guaranteed to be non-degenerate
	in the sense that it is positive definite with finite entries, implying
	correlations away from one in absolute value (see
	Theorem~\ref{thm:nondeg}), as long as there is at least one point in
	$\Theta$ such that $\ell(\btheta)$ is not $-\infty$.
	
	The condition on the boundedness of~\eqref{eq:bern_likl} from above is
	just one sufficient condition for the finiteness of the MPL estimates,
	which is also satisfied by a vanilla (non-adaptive) Gauss-Hermite
	quadrature or simulation-based approximations of the likelihood
	\citep[see, for example,][]{mcculloch:1997}. A weaker sufficient
	condition is that the penalized objective diverges to $-\infty$ as
	$\btheta(r)$ diverges to $\partial \Theta$, or, in other words, that
	the penalty dominates the likelihood in absolute value for any
	divergent path. From the numerous numerical experiments we carried
	out, we encountered no evidence that this weaker condition does not
	hold for the adaptive quadrature and Laplace approximations to the
	log-likelihood that the glmer routine of the R package \texttt{lme4}
	employs.
	
	The penalties arising from the independent normal and independent t
	prior structures implemented in \texttt{blme} are such that
	$\lim_{r \to \infty} P(\btheta(r)) = -\infty$, whenever $\btheta(r)$
	diverges to the boundary of the parameter space for the fixed
	effects. As a result, the bglmer[n] and bglmer[t] estimates for the
	fixed effects are always finite, as also illustrated in
	Table~\ref{tab:culcita_inf}. Nevertheless, the default gamma-prior
	like penalty used in bglmer for the variance component $\sigma$ is
	$1.5 \log\sigma$, which, while it ensures that the estimate of
	$\log \sigma$ is not minus infinity, does not guard against estimates
	that are $+\infty$. This is apparent in
	Figure~\ref{fig:culcita_simu0}, where several extreme positive
	bglmer[n] and bglmer[t] estimates are observed for $\log\sigma$.
	
	\section{Equivariance under linear transformations of fixed effects}\label{sec:invariance}
	The ML estimates are known to be equivariant under transformations of
	the model parameters \citep[see, for example][]{zehna:1966}.  A
	particularly useful class of transformations in mixed effects logistic
	regression, and more generally GLMMs with categorical covariates, is
	the collection of scaled linear transformations $\bbeta' = \bC \bbeta$
	of the fixed effects for known, invertible, real matrices $\bC$.
	
	Such invariance properties of the ML estimates guarantee that one can
	obtain ML estimates and corresponding estimated standard errors for
	arbitrary sets of scaled parameter contrasts of the fixed effects,
	when estimates for one of those sets of contrasts are available and
	with no need to re-estimate the model. Such equivariance properties
	eliminate any estimation and inferential ambiguity when two
	independent researchers analyze the same data set using the same model
	but with different contrasts for the fixed effects, for example, due
	to software defaults.
	
	Following the argument in \citet{zehna:1966}, the
	condition required for achieving equivariance of MPL estimators is that
	the penalty for the fixed effects parameters behaves like the
	log-likelihood under linear transformations; that is
	\begin{equation}
	\label{eq:invariance}
	P_{(f)}(\bC\bbeta) = P_{(f)}(\bbeta) + a,
	\end{equation}
	where $a \in \Re$ is a scalar that does not depend on $\bbeta$.
	
	Let $\eta_{ij} = \bx_{ij}^\top \bC^{-1} \bgamma + \bz_{ij}^\top \bu_i$
	in~\eqref{eq:bern_cluster} for a known real matrix $\bC$. Then,
	$\bgamma = \bC\bbeta$, and the penalty for the fixed effects in the
	$\bgamma$ parameterization is
	$P_{(f)}(\bgamma) = P_{(f)}(\bbeta) - \log \det (\bC)$, which is of
	the form of~\eqref{eq:invariance}. In contrast, the penalties arising
	from the normal and t prior structures used to compute the bglmer[n]
	and bglmer[t] fixed effect estimates in Table~\ref{tab:culcita_inf} do
	not satisfy~\eqref{eq:invariance}. Hence, the bglmer[n] and bglmer[t]
	estimates are not equivariant under linear transformations of the
	parameters.
	
	\section{Consistency and asymptotic normality of the MSPL estimator}
	\label{sec:asymptotics}
	To mitigate any distortion of the estimates by the penalization of the
	log-likelihood, and preserve ML asymptotics, we choose the scaling
	factors $c_1,c_2$ to be ``soft'' enough to control
	$\vnorm{\nabla_{\btheta} P(\boldsymbol{\theta})}$ in terms of the rate
	of information accumulation
	$\bb{r}_n = (r_{n1},\ldots,r_{nd})^\top \in \Re^d$ with
	$d = p + q(q + 1)/2$. The components of the rate of information
	accumulation are defined to diverge to $+\infty$ and satisfy
	$R_n^{-1/2}J(\btheta)R_n^{-1/2} \overset{p}{\to} I(\btheta)$ as
	$n \to \infty$, where $R_n$ is a diagonal matrix with the elements of
	$\bb{r}_n$ on its main diagonal,
	$J(\boldsymbol{\theta}) = -\nabla_{\btheta} \nabla_{\btheta}^\top
	\ell(\boldsymbol{\theta})$ is the observed information matrix, and
	$I(\boldsymbol{\theta})$ is a $\mathcal{O}(1)$ matrix. In this way, we
	allow for scenarios where the rate of information accumulation varies
	across the components of the parameter vector.   
	
	According to Theorem~\ref{thm:jeffrey_deriv_bound} in the Appendix,
	the gradient of the logarithm of the Jeffreys' prior
	in~\eqref{eq:jeffreys_pen} can be bounded as
	$\vnorm{\nabla_{\bbeta} P_{(f)}(\bbeta)} \le p^{3/2} \max_{s,t}
	|x_{st}|/2$, where $x_{st}$ is the $t$th element in the $s$th row of
	$\bX$ as defined in Section~\ref{sec:non_boundary} and
	$\vnorm{\cdot}$ is the Euclidean norm. Furthermore,
	$\vnorm{\nabla_{\bpsi} P_{(v)}(\bpsi)} \le \sqrt{q(q+1)/2}$ because
	$|d D(x) / dx| \le 1$. Hence, an application of the triangle
	inequality gives that
	$\vnorm{\nabla_{\btheta} P(\btheta)} \le c_1 p^{3/2} \max_{s, t}
	|x_{st}| / 2 + c_2 \sqrt{q(q+1) / 2}$. For the scaling factors $c_1$
	and $c_2$, we propose using $c_1 = c_2 = c$ to be the square root of
	the average of the approximate variances of $\hat\eta_{ij}^{(f)}$ at
	$\bbeta = \b0_p$. A delta method argument gives that
	$c = 2 \sqrt{p/n}$. Therefore,
	\[
	\vnorm{\nabla_{\btheta} P(\btheta)} \le \frac{p^{2}}{\sqrt{n}} \max_{s, t} |x_{st}|  +  \sqrt{\frac{2pq(q+1)}{n}} \, .
	\]
	Hence, as long as $\max_{s, t} |x_{st}| = O_p(n^{1/2})$, it holds that
	$ \underset{\btheta\in \Theta}{\sup} \vnorm{R_n^{-1} \nabla_{\btheta}
		P(\btheta)} = o_p(1)$, and the conditions for consistency and
	asymptotic normality of $\bttilde$ in Theorem~\ref{thm:soft_pen_cons}
	and Theorem~\ref{thm:asymp_norm_soft_pen}, respectively, are
	satisfied. The condition on the maximum of the absolute elements of
	the model matrix is not unreasonable in practice. It certainly holds
	true for bounded covariates such as dummy variables, as encountered in
	the real-data examples in the current paper. It is also true for model
	matrices with subgaussian random variables with common variance proxy
	$\sigma^2$, in which case
	$\max_{s,t} |x_{st} | = \Op{\sqrt{2\sigma^2 \log(2np)}}$ \citep[see,
	for example,][Theorem 1.14]{rigollet:2015}.
	
	\section{Conditional inference data} 
	\label{sec:ci}
	To demonstrate the performance of the MSPL mixed effects logistic regression with a multivariate random effects structure, we consider a subset
	of the data analyzed by \citet{singmann+etal:2016}. As discussed on
	CrossValidated
	(\url{https://stats.stackexchange.com/questions/38493}), this data set
	exhibits both infinite fixed effects estimates as well as degenerate
	variance components estimates when a Bernoulli-response GLMM is fitted
	by ML.
	
	The data set, originally collected as a control condition of experiment 3)b) in \citet{singmann+etal:2016} and therein analyzed in a different context, comes from an experiment in which participants worked on a probabilistic conditional inference task. Participants were presented with the conditional inferences modus ponens (MP; ``\textit{If p then q. p. Therefore q.}''), modus tollens (MT; ``\textit{If p then q. Not q. Therefore  not p.}''), affirmation of the consequent (AC; ``\textit{If p then q. q. Therefore p}''), and denial of the antecedent (DA, ``\textit{If p then q. Not p. Therefore not q}''), and asked to estimate the probability that the conclusion (``\textit{Therefore ...}'') follows from the conditional rule (``\textit{If p then q.}'') and the minor premise (``\textit{p}.'', ``\textit{not p}.'', ``\textit{q}.'', ``\textit{not q}.''). The material of the experiment consisted of the following four conditional rules with varying degrees of counterexamples (alternatives, disablers; indicated in parentheses below).
	\begin{enumerate}
		\item If a predator is hungry, then it will search for prey. (few disablers, few alternatives)
		\item If a person drinks a lot of coke, then the person will gain weight. (many disablers, many alternatives)
		\item If a girl has sexual intercourse with her partner, then she will get pregnant. (many disablers, few alternatives)
		\item If a balloon is pricked with a needle, then it will quickly lose air. (few disablers, many alternatives)
	\end{enumerate}
	To illustrate, for MP and conditional rule 1, a participant was asked: ``\textit{If a predator is hungry, then it will search for prey. A predator is hungry. How likely is it that the predator will search for prey?}'' Additionally, participants were asked to estimate the probability of the conditional rule, e.g. ``\textit{How likely is it that if a predator is hungry it will search for prey?}'', and the probability of the minor premises, e.g. ``\textit{How likely is it that a predator is hungry?}''. 
	
	The response variable of this data set is then a binary response indicating whether, given a certain conditional rule, the participants' probabilistic inference is $p$-valid. An inference is deemed $p$-valid if the summed uncertainty of the premises does not exceed the uncertainty of the conclusion, where uncertainty of a statement $x$ is defined as one minus the probability of $x$. For example, for MP, a respondent's inference is $p$-valid if $1-\Pr(\textit{``q''}) \leq 1 - \Pr(\textit{``Ìf p then q''}) + 1-\Pr(\textit{``p''}) $, where $\Pr(x)$ indicates the participant's estimated probability of statement $x$ ($p$-valid inferences are recorded as zero, $p$-invalid inferences as one). Covariates are the categorical variable ``counterexamples'' (``many'', ``few''), that indicates the degree of available counterexamples to a conditional rule, ``type'' (``affirmative'',``denial'') which describes the type of inference (MP and AC are affirmative, MT and DA are denial), and ``$p$-validity'' (``valid'',``invalid''), indicating whether an inference is $p$-valid in standard probability theory where premise and conclusions are seen as events (MP and MT are $p$-valid, while AC and DA are not). For each of the 29 participants, there exist 16 observations corresponding to all possible combinations of inference and conditional rule, giving a total of 464 observations, which are grouped along individuals by the clustering variable ``code''. A mixed effects logistic regression model can be employed to investigate the probabilistic validity of conditional inference given the type of inference and conditional rule as captured by the covariates and all possible interactions thereof. A random intercept and random slope for the variable ``counterexamples'' are introduced to account for response heterogeneity between participants. Hence the model under consideration is given by    
	\begin{align}
	\label{eq:cond_inf_model} 
	Y_{ij} \mid \bb{u}_i & \sim \text{Bernoulli}(\mu_{ij}) \quad \text{with} \quad
	g(\mu_{ij}) = \eta_{ij} = \bx_{ij}^\top \bbeta + \bz_{ij}^\top \bu_i\\
	\bu_i & \sim \text{N}(\b0_2, \bb{\Sigma})  \quad (i = 1, \ldots, 29; j = 1, \ldots, 16)\,,
	\end{align}
	where $\bbeta = (\beta_0,\beta_1,\ldots,\beta_8)$ are the fixed
	effects pertaining to the model matrix of the R model formula
	\texttt{response \raisebox{-0.9ex}{\~{}} type * p.validity *
		counterexamples + (1+counterexamples|code)}.\\ Gauss-Hermite
	quadrature is computationally challenging for multivariate random
	effect structures. For this reason glmer and bglmer do not offer
	it as an option. We approximate the likelihood
	of~\eqref{eq:cond_inf_model} about the parameters $\bbeta$, $\bL$
	using Laplace's method. We estimate the parameters $\bbeta$, $\bL$ by
	ML using the optimization routines CG (ML[CG]) and BFGS (ML[BFGS]) of
	the \texttt{optimx} R package \citep{nash:2014}, bglmer from the
	\texttt{blme} R package \cite{chung+etal:2013} using independent
	normal (bglmer[n]) and t (bglmer[t]) priors for the fixed effects and
	the default Wishart prior for the multivariate variance components. We
	also estimate the parameters using the proposed MSPL estimator with
	the fixed and random effects penalties of
	Section~\ref{sec:composite_penalty}. The estimates are given in
	Table~\ref{tab:cond_inf}, where we denote the entries of $\bL$ by
	$l_{st}$, for $s,t=1,2$.
	\begin{table}[t]
		\caption{ML, bglmer and MSPL estimates for the parameters of
			model~\eqref{eq:cond_inf_model} using a Laplace approximation of
			the log-likelihood. Estimated standard errors (in parentheses) are
			based on the inverse of the negative Hessian of the approximate
			likelihood. The estimated standard errors are not reported when
			the corresponding diagonal elements of the inverse are negative}
		\label{tab:cond_inf}
		\begin{center}
			\begin{tabular}{lD{.}{.}{3}D{.}{.}{3}D{.}{.}{3}D{.}{.}{3}D{.}{.}{3}}
				\toprule
				&
				\multicolumn{1}{c}{ML[BFGS]} & 
				\multicolumn{1}{c}{ML[CG]} &
				\multicolumn{1}{c}{bglmer[t]} &
				\multicolumn{1}{c}{bglmer[n]} & 
				\multicolumn{1}{c}{MSPL} \\
				\midrule
			$\beta_0$ & 16.28 & 7.63 & 10.60 & 10.04 & 6.22 \\ 
			& (2.09) & (3.69) & (1.56) & (1.23) & (2.49) \\ 
			$\beta_2$ & 4.27 & 3.22 & 1.57 & 1.28 & 0.00 \\ 
			& (4.57) & (14.94) & (1.58) & (1.80) & (2.83) \\ 
			$\beta_3$ & -6.75 & -2.08 & 0.12 & 0.35 & -2.17 \\ 
			& (3.02) & (5.19) & (1.59) & (1.40) & (2.60) \\ 
			$\beta_4$ & -14.44 & -5.80 & -8.50 & -7.90 & -4.37 \\ 
			& (2.13) & (3.75) & (1.80) & (1.35) & (2.51) \\ 
			$\beta_5$ & 3.21 & 0.85 & 0.41 & 0.45 & 2.17 \\ 
			& (3.34) & (16.63) &  & (0.35) & (4.07) \\ 
			$\beta_6$ & -4.27 & -3.24 & -1.70 & -1.45 & 0.00 \\ 
			& (4.60) & (14.95) & (1.62) & (1.92) & (2.86) \\ 
			$\beta_7$ & 8.25 & 3.78 & 1.16 & 0.86 & 3.64 \\ 
			& (3.13) & (5.26) & (1.77) & (1.63) & (2.71) \\ 
			$\beta_8$ & -3.94 & -1.81 & -0.89 & -0.86 & -2.87 \\ 
			& (3.49) & (16.67) &  & (1.29) & (4.18) \\ 
			$\log l_{1,1}$ & 2.02 & 0.73 & 3.34 & 3.34 & -0.63 \\ 
			& (0.16) & (1.03) & (0.00) & (0.01) & (2.48) \\ 
			$l_{2,1}$ & -7.67 & -2.25 & -28.35 & -28.13 & -0.60 \\ 
			& (0.97) & (2.29) & (0.12) &  & (1.69) \\ 
			$\log l_{2,2}$ & -5.17 & -2.92 & -0.65 & -0.73 & -1.21 \\ 
			& (83.15) & (8.01) & (0.75) & (0.83) & (1.30) \\ 
				\bottomrule
			\end{tabular}
		\end{center}
	\end{table}
	As in the Culcita example of Section~\ref{sec:culcita_dat}, we
	encounter fixed effects estimates that are extreme on the logistic
	scale for ML[BFGS] and ML[CG]. We note that the strongly negative
	estimates for $l_{22}$ in conjunction with the inflated asymptotic
	standard errors of the ML[BFGS] estimates are highly indicative of
	parameter estimates on the boundary of the parameter space, meaning
	that $l_{22}$ is effectively estimated as zero. The degeneracy of the
	estimated variance components is even more striking for the bglmer
	estimates, which give estimates of $l_{11},l_{21}$ greater than $28$
	in absolute value, which corresponds to estimated variance components
	greater than $800$ in absolute value. This underlines that, as with
	the gamma prior penalty for univariate random effects, the Wishart
	prior penalty, while effective in preventing variance components being
	estimated as zero, cannot guard against infinite estimates for the
	variance components. We finally note that for the MSPL, all parameter
	estimates as well as their estimated standard errors appear to be
	finite. Further, while the variance components penalty guards against
	estimates that are effectively zero, the penalty induced shrinkage
	towards zero is not as strong as with the Wishart prior penalty of the
	bglmer routine.
	
	\begin{figure}[t]
		\begin{center}
			\includegraphics[width=\textwidth]{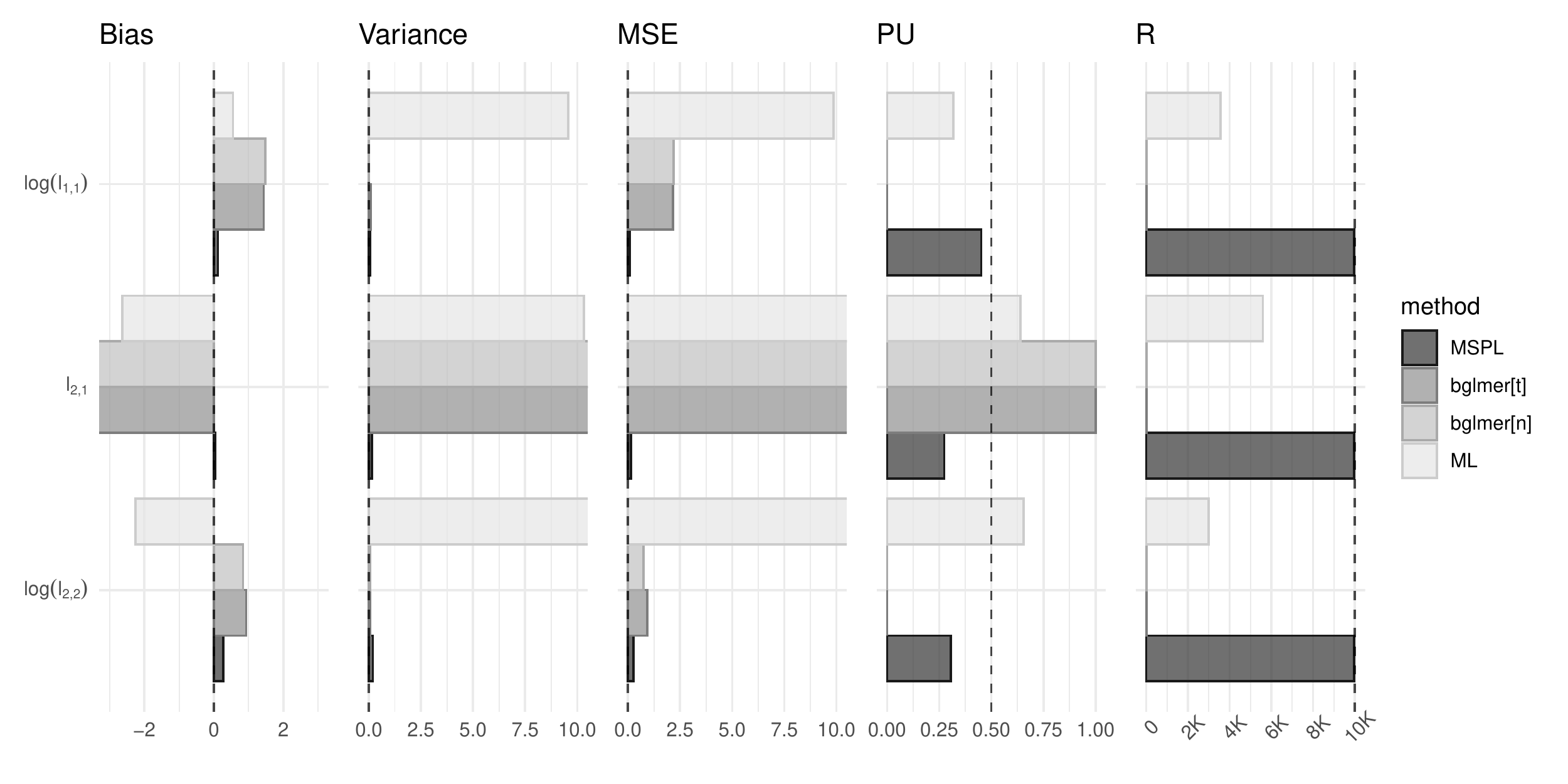}
		\end{center}
		\caption{Simulation based estimates of the bias, variance, mean squared-error (MSE), and the probability of underestimation (PU), for MSPL, bglmer, and ML estimators based on $10\ 000$ samples from~\eqref{eq:cond_inf_model} at the MSPL estimates in Table~\ref{tab:cond_inf}. Parameter estimates on the boundary are discarded from the calculation of the simulation summaries. The number of samples used for the calculation of summaries per parameter is given in the rightmost panel (R)}
		\label{fig:cond_inf_simul}
	\end{figure}
	
	To investigate the frequentist properties of the estimators on this data set, we repeat the simulation design of the Culcita data example from Section~\ref{sec:culcita_dat} for the conditional inference data at the MSPL estimate of Table~\ref{tab:cond_inf}. We point out the extremely low percentage of bglmer estimates without estimation issues that were used in the summary of Figure~\ref{fig:cond_inf_simul}. We note that the MSPL outperforms ML and bglmer, which incur substantial bias and variance due to their singular and infinite variance components estimates. Table~\ref{tab:sim2}, which shows the percentiles of the centered variance components estimates for each estimator, shows that ML and bglmer return heavily distorted variance components estimates, reflecting the fact that these estimators are unable to fully guard against degenerate variance components estimates. A comprehensive simulation summary for all parameters is given in Figure~\ref{fig:cond_inf_full_supp} and Table~\ref{tab:cond_inf_full_supp} of the Supplementary Material document. 
	
	\begin{table}[t]
		
		\caption{Percentiles of centered variance components estimates from simulating $10\ 000$ samples from model~(\ref{eq:cond_inf_model}) at the MSPL of Table~\ref{tab:cond_inf}}
		\label{tab:sim2}
		\begin{center}
			\begin{tabular}{lrccccccc}
				\toprule 
				&& \multicolumn{7}{c}{Percentiles} \\ \cmidrule{3-9} 
				&&$5\%$&$10\%$&$25\%$&$50\%$&$75\%$&$90\%$&$95\%$\\ \cmidrule{3-9} 
				\cmidrule{3-9} 
				& $\log l_{1,1}$ & -0.06 & -0.05 & -0.02 & 0.01 & 0.12 & 0.44 & 0.66 \\ 
				MSPL & $l_{2,1}$ & -0.66 & -0.27 & -0.02 & 0.12 & 0.21 & 0.35 & 0.45 \\ 
				& $\log l_{2,2}$ & -0.42 & -0.31 & -0.08 & 0.26 & 0.58 & 0.87 & 1.02 \\ 
				\cmidrule{3-9} 
				& $\log l_{1,1}$ & 1.02 & 1.07 & 1.28 & 1.40 & 1.63 & 1.82 & 1.85 \\ 
				bglmer[t] & $l_{2,1}$ & -33.53 & -31.71 & -26.77 & -2.19 & -1.30 & -0.69 & -0.67 \\ 
				& $\log l_{2,2}$ & 0.54 & 0.56 & 0.69 & 0.92 & 1.14 & 1.21 & 1.28 \\ 
				\cmidrule{3-9} 
				& $\log l_{1,1}$ & 1.40 & 1.40 & 1.41 & 1.45 & 1.47 & 1.57 & 1.64 \\ 
				bglmer[n] & $l_{2,1}$ & -29.76 & -3.72 & -3.39 & -1.74 & -1.53 & -0.93 & -0.63 \\ 
				& $\log l_{2,2}$ & 0.51 & 0.57 & 0.63 & 0.82 & 1.01 & 1.12 & 1.23 \\ 
				\cmidrule{3-9} 
				& $\log l_{1,1}$ & -2.75 & -2.05 & -0.91 & 1.21 & 2.36 & 2.62 & 2.66 \\ 
				ML & $l_{2,1}$ & -7.47 & -7.19 & -5.74 & -1.51 & 0.54 & 0.81 & 1.09 \\ 
				& $\log l_{2,2}$ & -6.62 & -4.54 & -3.22 & -1.77 & 0.39 & 0.85 & 1.04 \\ 
				\bottomrule 
			\end{tabular}
		\end{center}
	\end{table}
	
	\section{Concluding remarks}
	\label{sec:sum}
	This paper proposed the MSPL estimator for stable parameter estimation
	in mixed effects logistic regression. The method has been found, both
	theoretically and empirically, to have superior finite sample
	properties to the maximum penalized likelihood estimator proposed in
	\citet{chung+etal:2013}. We showed that penalizing the log-likelihood
	by scaled versions of the Jeffreys' prior for the model with no random
	effects and of a composition of the negative Huber loss gives
	estimates in the interior of the parameter space. Scaling the penalty
	function appropriately preserves the optimal ML asymptotics, namely
	consistency, asymptotic normality and Cram\'{e}r-Rao efficiency.
	
	We note that the conditions of Theorem~\ref{thm:soft_pen_cons}
	and Theorem~\ref{thm:asymp_norm_soft_pen} that are used for
	establishing the consistency and asymptotic normality of the MSPL
	estimator in Section~\ref{sec:asymptotics} are merely sufficient;
	there may be other sets of conditions that lead to the same results.
	
	While the MSPL is particularly relevant for mixed effects logistic
	regression, the concept is far more general and we expect it to be
	useful in other settings, for which degenerate ML estimates are known
	to occur, such as GLMMs with
	categorical or ordinal responses. The composite
	negative Huber loss penalty can be readily applied to other GLMMs, to
	prevent singular variance components estimates. We point out that the
	bound on the partial derivatives of the Jeffreys' prior in
	Theorem~\ref{thm:jeffrey_deriv_bound} for a logistic GLM extends to
	the cauchit link up to a constant; bounds for other link functions,
	like the probit and the complementary log-log are the subject of
	current work.
	
	\section{Supplementary Materials}
	The supplementary material to this paper is available at
	\url{https://github.com/psterzinger/softpen_supplementary}, and
	consists of the three folders ``Scripts'', ``Data'', ``Results'', and
	a Supplementary Material document, which provides further outputs and
	additional simulation studies. The ``Scripts'' directory contains all
	R scripts to reproduce the numerical analyses, simulations, graphics
	and tables in the main text and the Supplementary Material
	document. The ``Data'' directory contains the data used for the
	numerical examples in Sections~\ref{sec:culcita_dat} and~\ref{sec:ci},
	and the ``Results'' directory provides all results from the numerical
	experiments and analyses in the main text and the Supplementary
	Material. All numerical results are replicable in R version 4.2.2 (2022-10-31), and with the following packages: blme~1.0-5 \citep{chung+etal:2013}, doMC~1.3.8 \citep{doMC}, dplyr~1.0.10 \citep{dplyr}, %foreach~1.5.2 \citep{foreach}, iterators~1.0.14 \citep{iterators}, 
	lme4~1.1-31 \citep{bates+etal:2015}, MASS~7.3-58.1 \citep{MASS}, Matrix~1.5-3 \citep{Matrix}, numDeriv~2016.8-1.1 \citep{gilbert+varadhan:2019}, optimx~2022-4.30 \citep{nash:2014}. A complete configuration is given in the supplementary material repository. 
	
	\appendix
	
	\section{Theorem~\ref{thm:nondeg}}
	
	\begin{theorem}
		\label{thm:nondeg}
		Let $\tilde{\bb L} \in \Re^{q \times q}$ be a real, lower triangular matrix with finite entries and strictly positive entries on its main diagonal. Then $\tilde{\bb \Sigma} = \tilde{\bb L}\tilde{\bb L}^\top$ is not degenerate. 
	\end{theorem}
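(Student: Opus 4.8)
The plan is to unpack the phrase ``not degenerate'' into the three properties the text attaches to it---that $\tilde{\bb\Sigma} = \tilde{\bb L}\tilde{\bb L}^\top$ has finite entries, is positive definite, and induces correlations strictly less than one in absolute value---and then verify each in turn by elementary linear algebra. First I would observe that finiteness is immediate: each entry $[\tilde{\bb\Sigma}]_{st} = \sum_{r=1}^q \tilde l_{sr}\tilde l_{tr}$ is a finite sum of products of finite reals, since $\tilde{\bb L}$ has finite entries by assumption.

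Next I would establish positive definiteness. Because $\tilde{\bb L}$ is lower triangular, $\det(\tilde{\bb L}) = \prod_{i=1}^q \tilde l_{ii}$, which is strictly positive by the hypothesis on the diagonal; hence $\tilde{\bb L}$, and therefore $\tilde{\bb L}^\top$, is invertible. Then for any $\bv \in \Re^q$ with $\bv \neq \b0_q$ we have $\tilde{\bb L}^\top \bv \neq \b0_q$, so $\bv^\top \tilde{\bb\Sigma}\bv = \vnorm{\tilde{\bb L}^\top \bv}^2 > 0$. Thus $\tilde{\bb\Sigma}$ is symmetric and positive definite, with finite entries.

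Finally, for the statement about correlations I would use that positive definiteness of $\tilde{\bb\Sigma}$ implies positive definiteness of every $2\times 2$ principal submatrix, so that for $s\neq t$, $[\tilde{\bb\Sigma}]_{ss}[\tilde{\bb\Sigma}]_{tt} - [\tilde{\bb\Sigma}]_{st}^2 > 0$; moreover $[\tilde{\bb\Sigma}]_{ss} = \sum_{r} \tilde l_{sr}^2 \ge \tilde l_{ss}^2 > 0$, so each diagonal entry is bounded away from zero and the correlation $\rho_{st} = [\tilde{\bb\Sigma}]_{st}/\sqrt{[\tilde{\bb\Sigma}]_{ss}[\tilde{\bb\Sigma}]_{tt}}$ is well defined and satisfies $\rho_{st}^2 < 1$. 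There is no substantive obstacle in this argument; the only point requiring any care is making the definition of ``degenerate'' explicit and ensuring the diagonal of $\tilde{\bb\Sigma}$ is strictly positive so that the correlations are defined, which is exactly what the strict positivity of the diagonal of $\tilde{\bb L}$ guarantees.
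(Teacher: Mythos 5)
Your proof is correct, and the first two parts (finiteness via finite sums of products, positive definiteness via invertibility of $\tilde{\bb L}^\top$ so that $\bv^\top\tilde{\bb\Sigma}\bv = \vnorm{\tilde{\bb L}^\top\bv}^2 > 0$) follow essentially the same lines as the paper, which writes the quadratic form as an inner product and invokes the full rank of the lower triangular factor. Where you genuinely diverge is the correlation bound. The paper argues by contradiction through the equality case of the Cauchy--Schwarz inequality: a correlation of $\pm 1$ would force two rows $\tilde{\bb l}_s, \tilde{\bb l}_t$ of $\tilde{\bb L}$ to be linearly dependent, which the triangular structure with strictly positive diagonal rules out. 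You instead observe that positive definiteness of $\tilde{\bb\Sigma}$ already forces every $2\times 2$ principal minor to be strictly positive, i.e.\ $[\tilde{\bb\Sigma}]_{ss}[\tilde{\bb\Sigma}]_{tt} - [\tilde{\bb\Sigma}]_{st}^2 > 0$, which gives $\rho_{st}^2 < 1$ directly. Your route is the cleaner one: it makes the correlation property a corollary of positive definiteness rather than a separate verification, and it sidesteps the paper's somewhat delicate argument about when two rows of a lower triangular matrix can be linearly dependent. The paper's approach, on the other hand, makes explicit the structural reason (the staircase of nonzero diagonal entries) why the factorization cannot produce perfectly correlated components, which is perhaps more illuminating about the role of the Cholesky parameterization. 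Both are complete; no gap in either.
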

	\begin{proof}
		Recall that a variance-covariance matrix is not degenerate if it is positive definite with finite entries, implying correlations away from one in absolute value. We prove each property in turn. To see that $\tilde{\bb \Sigma}$ is positive-definite, take any $\bb x\in \Re^q:\bb x\neq \bb 0_q$. Then by straightforward manipulations 
		\begin{equation}
		\label{eq:chol1}
		\bb x^\top\tilde{\bb \Sigma}\bb x = \bb x^\top\tilde{\bb L}\tilde{\bb L}^\top\bb x
		= (\tilde{\bb L}^\top\bb x)^\top\tilde{\bb L}^\top\bb x 
		= \langle \tilde{\bb L}^\top\bb x, \tilde{\bb L}^\top\bb x\rangle \geq 0
		\end{equation}
		where $\langle\cdot,\cdot\rangle$ denotes the standard
		Euclidean inner product. Hence $\tilde{\bb \Sigma}$ is
		positive semi-definite. Suppose that there is some
		$\bb x \in \Re^q$ such that $\bb x^\top \bb \Sigma \bb
		x=0$. Then by \eqref{eq:chol1},
		$\langle \tilde{\bb L}^\top\bb y, \tilde{\bb L}^\top\bb y
		\rangle = 0$ which holds if and only if
		$\tilde{\bb L}^\top\bb y=\bb 0_q$. Now since $\tilde{\bb L}$
		is lower triangular with strictly positive diagonal entries,
		it is full rank. But then
		$\bb y = \tilde{\bb L}^\top \bb x=\bb 0_q$ implies that
		$\bb x = \bb 0_q$ so that $\tilde{\bb \Sigma}$ is positive
		definite.  To prove that $\tilde{\bb \Sigma}$ has finite
		entries, note that
		$\tilde{\bb \Sigma}_{st} = \langle \tilde{\bb l}_s,\tilde{\bb
			l}_t \rangle$, where $\tilde{\bb l}_s$ is the $s$th row
		vector of $\tilde{\bb L}$. Since all elements of
		$\tilde{\bb l}_s,\tilde{\bb l}_t$ are finite, so is their
		inner product.  Finally, towards a contradiction, assume that
		$\tilde{\bb \Sigma}$ implies correlations of one in absolute
		value. Then there exist indices $s,t, s \neq t$ such that
		\[
		\left|\frac{\tilde{\bb \Sigma}_{st} }{\sqrt{ \tilde{\bb \Sigma}_{ss} \tilde{\bb \Sigma}_{tt}}}\right| =1 	\iff |\tilde{\bb \Sigma}_{st}| = \sqrt{ \tilde{\bb \Sigma}_{ss} \tilde{\bb \Sigma}_{tt}}  
		\iff |\langle \tilde{\bb l}_s,\tilde{\bb l}_t\rangle| = \|\tilde{\bb l}_s\|\|\tilde{\bb l}_t\|
		\]
		where $\|\bb x\|= \sqrt{\langle\bb x ,\bb x\rangle}$ is the induced inner product norm. It follows from the Cauchy-Schwarz inequality that the last equality holds if and only if $\tilde{\bb l}_s,\tilde{\bb l}_t$ are linearly dependent. Since $\tilde{\bb L}$ is lower triangular, this is only possible if $\tilde{\bb l}_s,\tilde{\bb l}_t$ have zeros in the same positions. But since all diagonal entries of $\tilde{\bb L}$ are strictly positive, this is not possible.
	\end{proof}
	
	\section{Consistency and asymptotic normality of MPL estimators}
	\label{sec:softpen}
	
	\subsection{Setup}
	
	Suppose that we observe the values $\by_1, \ldots, \by_k$ of a
	sequence of random vectors $\bY_1, \ldots, \bY_k$ with
	$\by_i = (y_{i1}, \ldots, y_{in_i})^\top \in \mathcal{Y} \subset
	\Re^{n_i}$, possibly with a sequence of covariate vectors
	$\bv_1, \ldots, \bv_k$, with
	$\bv_i = (v_{i1}, \ldots, v_{is})^\top \in \mathcal{X} \subset
	\Re^{s}$. Let $\bY = (\bY_1^\top, \ldots, \bY_k^\top)^\top$, and
	denote by $\bV$ the set of $\bv_1, \ldots, \bv_k$. Further, assume
	that the data generating process of $\bY$ conditional on $\bV$ has a
	density or probability mass function $f(\bY \mid \bV; \btheta)$,
	indexed by a parameter $\btheta \in \Theta \subset \Re^d$. Denote the
	parameter that identifies the conditional distribution of $\bY$ given
	$\bV$ by $\btnod \in \Theta$.
	
	Define the ML estimator as
	$\hat\btheta = \arg \max_{\btheta\in \Theta} \ell(\btheta)$, where
	$\ell(\btheta) = \log f(\bY \mid \bV; \btheta)$, and let
	$\tilde\btheta$ be the MPL estimator
	$\tilde\btheta = \arg\max_{\btheta \in \Theta}\{\ell(\btheta) +
	P(\btheta) \}$, where $P(\btheta)$ is an additive penalty to
	$\ell(\btheta)$ that may depend on $\bY$ and $\bV$. Consistency and
	asymptotic normality of the proposed MPL estimator follow readily from
	similar such results for ML estimators in \citet{ogden:2017} where the
	approximation error to the log-likelihood is an additive error
	term. In fact, the results presented in this section are a direct
	translation of the work in \citet{ogden:2017}, where the term
	``approximation error'' is replaced by ``penalty function'', and by
	allowing the rate of information accumulation to vary across the
	components of the parameter vector $\btheta$. Finally, let
	$\vnorm{\cdot}$ be some vector norm and $\mnorm{\cdot}$ be the
	corresponding operator norm.
	
	\subsection{Consistency}
	
	The consistency of $\bttilde$ can be established under the following
	regularity conditions on the log-likelihood gradient
	\citep[see][Chapter 5]{vaart:1998} and the penalty gradient.
	\begin{itemize}
		\item[A1] $\ell(\btheta)$ is differentiable with gradient
		$S(\btheta)$.
		\item[A2]
		$\underset{\btheta \in \Theta}{\sup} \; \vnorm{R_n^{-1}
			S(\btheta) - S_0(\btheta)} \overset{p}{\to}0$ for some
		deterministic function $S_0(\btheta)$, where $R_n$ is a
		diagonal matrix whose diagonal elements diverge to $+\infty$
		as $n$ grows.
		\item[A3] For all $\varepsilon>0$,
		$\underset{\btheta \in \Theta: \vnorm{\btheta-\btnod}\geq
			\varepsilon}{\inf} \vnorm{S_0(\btheta) }>0 =
		\vnorm{S_0(\btnod)}$.
		\item[A4] $P(\btheta)$ is differentiable with gradient $A(\btheta)$.
	\end{itemize}
	
	Define $\hat{\btheta}$ and $\tilde{\btheta}$ to be such that
	$S(\hat{\btheta}) = \b0_d$ and
	$S(\tilde{\btheta}) + A(\tilde{\btheta}) = \b0_d$,
	respectively. 
	%        Furthermore, let
	%        $\delta(\btheta) = \vnorm{A(\btheta)}$ for some vector norm
	%        $\vnorm{\cdot}$, and, for $S \subseteq \Theta$, define
	%        $\delta^\infty(S) = {\sup}_{\btheta \in S} \, \delta(\btheta)$
	%        and let $\delta^\infty = \delta^\infty(\Theta)$.

	\begin{theorem}[Consistency]
		\label{thm:soft_pen_cons}
		Suppose that A1-A4 hold and $\underset{\btheta \in \Theta}{\sup} \, \vnorm{R_n^{-1}A(\btheta)} = o_p(1)$. Then, $\bttilde \overset{p}{\to} \btnod$.
	\end{theorem}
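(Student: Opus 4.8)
The plan is to run the classical Z-estimator consistency argument (see \citet[Theorem~5.9]{vaart:1998}), adapted so that the normalization is the diagonal matrix $R_n$ rather than a scalar and so that the penalty gradient $A$ enters merely as an asymptotically negligible perturbation of the score. Concretely, I would use the defining relation $S(\bttilde) + A(\bttilde) = \b0_d$ together with the uniform convergence in A2 to force $\vnorm{S_0(\bttilde)}$ to zero in probability, and then invoke the identification condition A3 to turn this into $\bttilde \overset{p}{\to} \btnod$.

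First I would rewrite the penalized score equation as $R_n^{-1} S(\bttilde) = - R_n^{-1} A(\bttilde)$ and bound the right-hand side uniformly, $\vnorm{R_n^{-1} A(\bttilde)} \le \sup_{\btheta \in \Theta} \vnorm{R_n^{-1} A(\btheta)} = \op{1}$ by hypothesis. Next, since $\bttilde \in \Theta$, the triangle inequality gives
\[
\vnorm{S_0(\bttilde)} \le \vnorm{R_n^{-1} S(\bttilde) - S_0(\bttilde)} + \vnorm{R_n^{-1} S(\bttilde)} \le \sup_{\btheta \in \Theta}\vnorm{R_n^{-1} S(\btheta) - S_0(\btheta)} + \vnorm{R_n^{-1} A(\bttilde)},
\]
and both terms on the right are $\op{1}$, the first by A2 and the second by the previous display; hence $\vnorm{S_0(\bttilde)} \overset{p}{\to} 0$. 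Here A1 and A4 are what make $S$ and $A$ well-defined and the score equation the correct first-order characterization of $\bttilde$.

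Finally I would convert this into consistency using A3. Fix $\varepsilon > 0$ and put $\delta_\varepsilon = \inf_{\btheta \in \Theta:\, \vnorm{\btheta - \btnod} \ge \varepsilon} \vnorm{S_0(\btheta)}$, which is strictly positive by A3. On the event $\{\vnorm{\bttilde - \btnod} \ge \varepsilon\}$ we necessarily have $\vnorm{S_0(\bttilde)} \ge \delta_\varepsilon$, so $\Pr(\vnorm{\bttilde - \btnod} \ge \varepsilon) \le \Pr(\vnorm{S_0(\bttilde)} \ge \delta_\varepsilon) \to 0$ by the preceding step. Since $\varepsilon$ was arbitrary, $\bttilde \overset{p}{\to} \btnod$.

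No step here is genuinely hard: the proof is a short soft-analysis chain. The only point that needs care is that A2 supplies \emph{uniform} convergence of $R_n^{-1}S$ to $S_0$ over $\Theta$, which is precisely what legitimizes evaluating $R_n^{-1}S(\btheta) - S_0(\btheta)$ at the data-dependent point $\bttilde$; without uniformity the chain breaks. I would also note that the statement implicitly presumes existence of a solution $\bttilde$ to the penalized score equation; establishing this rigorously would require an additional mild compactness-type condition on $\Theta$ and the penalized objective, but it is taken as given in the present formulation.
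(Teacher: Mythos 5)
Your proposal is correct and follows essentially the same route as the paper: both use the defining relation $S(\bttilde)+A(\bttilde)=\b0_d$ to show $\vnorm{R_n^{-1}S(\bttilde)}=\op{1}$ via the uniform bound on $R_n^{-1}A$, and then conclude by the Z-estimator consistency argument under A2 and A3. The only difference is cosmetic: the paper cites \citet[Theorem~5.9]{vaart:1998} for the final step, whereas you unpack that theorem's triangle-inequality and well-separation argument explicitly.
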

	
	\begin{proof}
		The proof is analogous to the proof of \citet[Theorem
		1]{ogden:2017} and follows from
		\citet[Theorem~5.9]{vaart:1998} on the consistency of
		$M$-estimators. \citet[Theorem~5.9]{vaart:1998} states that
		under assumptions A2 and A3 about the log-likelihood
		gradient, if $\vnorm{R_n^{-1}S(\bttilde)} = o_p(1)$ then
		$\bttilde \overset{p}{\to} \btnod$. It holds that
		\begin{equation}
		\label{eq:soft_pen_cons_proof_1}
		\vnorm{ R_n^{-1}S(\bttilde) } = \vnorm{ R_n^{-1}\left\{S(\bttilde) + A(\bttilde)\right\} - R_n^{-1}A(\bttilde) }
		= \vnorm{ \b0_d- R_n^{-1}  A(\bttilde) } 
		= o_p(1) \, .
		\end{equation}
		The second equality follows from the definition of
		$\tilde\btheta$, and the last equality follows from the
		assumption that
		$\underset{\btheta \in \Theta}{\sup} \,
		\vnorm{R_n^{-1}A(\btheta)} = o_p(1)$.
		
	\end{proof}          
	
	\subsection{Asymptotic normality}
	
	The asymptotic normality of $\bttilde$ can be established under the
	following conditions
	
	\begin{itemize}
		\item[A5] $\ell(\btheta)$ is three times differentiable.
		\item[A6]
		$\sup_{\btheta \in \Theta}
		\mnorm{R_n^{-1/2}J(\btheta)R_n^{-1/2} -I(\btheta) }
		\overset{p}{\to} 0$ for some positive definite $O(1)$ matrix
		$I(\btheta)$, that is continuous in $\btheta$ in a
		neighbourhood around $\btnod$, where $R_n$ is a diagonal
		matrix whose diagonal elements diverge to $+\infty$ as $n$
		grows.
		\item[A7]
		$R_n^{1/2}(\hat\btheta-\btnod) \overset{d}{\to}
		\text{N}(0,I(\btnod)^{-1})$.
		\item[A8] $\bttilde \overset{p}{\to} \btnod$.
		\item[A9] $P(\btheta)$ is three times differentiable.
	\end{itemize}
	
	\begin{theorem}[Asymptotic Normality]
		\label{thm:asymp_norm_soft_pen}
		Suppose that A5-A9 hold, and $\underset{\btheta \in \Theta}{\sup} \, \vnorm{R_n^{-1/2}A(\btheta)} =~o_p(1)$. Then, $R_n^{1/2}(\bttilde-\btnod) \overset{d}{\to} \text{N}(0,I(\btnod)^{-1})$.
	\end{theorem}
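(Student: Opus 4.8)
The plan is to show that the penalized estimator $\bttilde$ is first-order equivalent to the unpenalized maximiser $\hat\btheta$, i.e.\ that $R_n^{1/2}(\bttilde-\hat\btheta)\overset{p}{\to}\b0_d$, and then to conclude from A7 and Slutsky's theorem. This mirrors the proof of Theorem~\ref{thm:soft_pen_cons}, the only difference being that we now need first-order (rather than zeroth-order) control of the penalty gradient; it also parallels the asymptotic normality result of \citet{ogden:2017}.

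First I would Taylor-expand the estimating function $S$ about $\hat\btheta$ in integral form, which is legitimate under A5 and avoids the awkwardness of component-wise mean-value points for the vector-valued $S$. Since the Jacobian of $S$ equals $-J$, setting $\bar J_n=\int_0^1 J\bigl(\hat\btheta+t(\bttilde-\hat\btheta)\bigr)\,dt$ and using $S(\hat\btheta)=\b0_d$ together with $S(\bttilde)+A(\bttilde)=\b0_d$ gives
\[
\b0_d = S(\bttilde)+A(\bttilde) = S(\hat\btheta)-\bar J_n(\bttilde-\hat\btheta)+A(\bttilde) = -\bar J_n(\bttilde-\hat\btheta)+A(\bttilde),
\]
so that $\bar J_n(\bttilde-\hat\btheta)=A(\bttilde)$. (If $\Theta$ is not convex this is carried out on the event that the segment joining $\hat\btheta$ and $\bttilde$ lies in $\Theta$, which by A7 and A8 has probability tending to one since $\btnod$ is interior to $\Theta$.) Inserting $R_n^{-1/2}R_n^{1/2}$ and rearranging,
\[
\bigl(R_n^{-1/2}\bar J_n R_n^{-1/2}\bigr)\,R_n^{1/2}(\bttilde-\hat\btheta) = R_n^{-1/2}A(\bttilde).
\]

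Next I would argue that $R_n^{-1/2}\bar J_n R_n^{-1/2}\overset{p}{\to}I(\btnod)$. Because $R_n$ does not depend on $t$, $R_n^{-1/2}\bar J_n R_n^{-1/2}=\int_0^1 R_n^{-1/2}J(\hat\btheta+t(\bttilde-\hat\btheta))R_n^{-1/2}\,dt$. A7 forces $\hat\btheta\overset{p}{\to}\btnod$ (the diagonal of $R_n^{-1/2}$ vanishes), and A8 gives $\bttilde\overset{p}{\to}\btnod$, so $\sup_{t\in[0,1]}\vnorm{\hat\btheta+t(\bttilde-\hat\btheta)-\btnod}\overset{p}{\to}0$; combining this with the uniform-in-$\btheta$ convergence in A6 and the continuity of $I$ near $\btnod$ shows the integrand converges in probability to $I(\btnod)$ uniformly in $t$, hence so does the integral. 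Since $I(\btnod)$ is positive definite, $R_n^{-1/2}\bar J_n R_n^{-1/2}$ is invertible with probability tending to one and its inverse converges in probability to $I(\btnod)^{-1}=O(1)$. Therefore
\[
R_n^{1/2}(\bttilde-\hat\btheta)=\bigl(R_n^{-1/2}\bar J_n R_n^{-1/2}\bigr)^{-1}R_n^{-1/2}A(\bttilde)=O_p(1)\cdot o_p(1)=o_p(1),
\]
where $R_n^{-1/2}A(\bttilde)=o_p(1)$ follows from the hypothesis $\sup_{\btheta\in\Theta}\vnorm{R_n^{-1/2}A(\btheta)}=o_p(1)$ together with A9 (which ensures $A$ is well defined). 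Finally, $R_n^{1/2}(\bttilde-\btnod)=R_n^{1/2}(\bttilde-\hat\btheta)+R_n^{1/2}(\hat\btheta-\btnod)\overset{d}{\to}\text{N}(0,I(\btnod)^{-1})$ by A7 and Slutsky's theorem.

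I expect the main obstacle to be the middle step: establishing $R_n^{-1/2}\bar J_n R_n^{-1/2}\overset{p}{\to}I(\btnod)$ and its eventual invertibility. This requires simultaneously using the uniform-in-$\btheta$ convergence of the rescaled observed information from A6, the continuity of the limit $I(\cdot)$, and the consistency of both $\hat\btheta$ and $\bttilde$, so that the random evaluation points along the Taylor segment can be uniformly controlled; a minor additional point is the convexity/interiority argument needed to keep that segment inside $\Theta$.
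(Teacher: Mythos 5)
Your proposal is correct and follows essentially the same route as the paper's proof: both reduce the claim to showing $R_n^{1/2}(\bttilde-\hat\btheta)=o_p(1)$ via a Taylor expansion of $S$ about $\hat\btheta$, identify $R_n^{1/2}(\bttilde-\hat\btheta)$ as $[R_n^{-1/2}J(\cdot)R_n^{-1/2}]^{-1}R_n^{-1/2}A(\bttilde)$, control the inverted sandwiched information via A6, bound $R_n^{-1/2}A(\bttilde)$ by the supremum hypothesis, and finish with A7 and Slutsky. Your use of the integral (rather than mean-value) form of the expansion, and your more explicit argument that $R_n^{-1/2}\bar J_n R_n^{-1/2}\overset{p}{\to}I(\btnod)$, are minor technical refinements of the same argument, not a different approach.
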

	\begin{proof}
		We show that $R_n^{1/2}(\tilde{\bb \theta} -\hat{\bb \theta}) = o_p(1)$, which by A7, establishes the claim. Let $S(\btheta)$ be the gradient of $\ell(\btheta)$ and $J(\btheta) = -\nabla\nabla^\top \ell(\btheta)$ and consider a first-order Taylor expansion of $S(\btheta)$ around $\hat{\btheta}$. Then premultiplying by $R_n^{-1}$ gives
		\begin{equation}\label{eq:norm_taylor}
		R_n^{-1}S(\btheta) = R_n^{-1}S(\hat{\btheta}) + R_n^{-1} \nabla S(\btheta^*) (\btheta-\hat{\btheta}) = \b0_q -R_n^{-1}J(\btheta^*)(\btheta-\hat{\btheta})\, ,
		\end{equation}
		where $\btheta^*$ lies between $\btheta$ and $\hat{\btheta}$ and the second equality follows by the definition of $\hat{\btheta}$ and $J(\btheta)$. Hence
		\begin{equation}\label{eq:norm_taylor2}
		\b0_q = R_n^{-1}S(\bttilde) + R_n^{-1}A(\bttilde) = R_n^{-1} A(\bttilde) - R_n^{-1}J(\btheta^*)(\bttilde-\hat{\btheta})  \, , 
		\end{equation}
		where the first equality follows by the definition of $\bttilde$, and the second from substituting the right hand side of \eqref{eq:norm_taylor} for $S(\bttilde)$. Therefore 
		\begin{equation}
		R_n^{1/2}(\tilde{\bb \theta} -\hat{\bb \theta}) = [R_n^{-1/2}J(\btheta^*)R_n^{-1/2}]^{-1} [R_n^{-1/2}A(\bttilde)] \, . 
		\end{equation}
		Note that by A6, and an application of the continuous mapping theorem (see for example \citet[Theorem 2.1]{vaart:1998}), $\mnorm{[R_n^{-1/2}J(\btheta^*)R_n^{-1/2}]^{-1}} =~\Op{1}$. Hence,
		\begin{equation}
		\begin{aligned}
		\vnorm{R_n^{1/2}(\tilde{\bb \theta} -\hat{\bb \theta})} &= \vnorm{ [R_n^{-1/2}J(\btheta^*)R_n^{-1/2}]^{-1} [R_n^{-1/2}A(\bttilde)]} \\ 
		& \leq \mnorm{[R_n^{-1/2}J(\btheta^*)R_n^{-1/2}]^{-1} }\vnorm{R_n^{-1/2}A(\bttilde)} \\ 
		& \leq \mnorm{[R_n^{-1/2}J(\btheta^*)R_n^{-1/2}]^{-1} }\underset{\btheta \in \Theta}{\sup} \, \vnorm{R_n^{-1/2}A(\btheta)} \\ 
		& = o_p(1) \,,
		\end{aligned}
		\end{equation}
		where the first inequality follows from the definition of the operator norm, and the last line from the assumption of the Theorem. 
		Therefore $R_n^{1/2}(\tilde{\bb \theta} -\hat{\bb \theta}) = o_p(1)$ as required. 
	\end{proof}

	\subsection{Approximate likelihoods}
	
	We note that the large sample results for the MPL estimator derived
	here operate under the assumption that $\ell(\btheta)$ is the exact
	log-likelihood. If $\bar{\ell}(\btheta)$ is an approximation to the
	exact log-likelihood and $\bar{\btheta}$ is the maximizer of the
	penalized approximate likelihood, then consistency and asymptotic
	normality of $\bar{\btheta}$ can be established under extra conditions
	on $\bar{S}(\btheta)$, the gradient of the approximate likelihood
	$\bar{\ell}(\btheta)$. In particular, for consistency it is sufficient
	that
	$\underset{\btheta \in \Theta}{\sup} \, \vnorm{R_n^{-1}
		\left\{\bar{S}(\btheta)-S(\btheta) \right\}}=o_p(1)$, and for
	asymptotic normality it is sufficient that $\bar{\ell}(\btheta)$ is
	three-times differentiable and
	$\underset{\btheta \in \Theta}{\sup} \, \vnorm{R_n^{-1/2}
		\left\{\bar{S}(\btheta)-S(\btheta) \right\}}=o_p(1)$. In this
	instance, one can replace all occurrences of $A(\btheta)$ by
	$A(\btheta) + \bar{S}(\btheta)-S(\btheta)$ in the proofs of Theorems
	\ref{thm:soft_pen_cons} and \ref{thm:asymp_norm_soft_pen}. A simple
	application of the triangle inequality establishes that
	$\underset{\btheta \in \Theta}{\sup} \vnorm{ R_n^{-c}\left\{A(\btheta)
		+ \bar{S}(\btheta)-S(\btheta)\right\}} = o_p(1)$ as in the
	assumptions of Theorem~\ref{thm:soft_pen_cons} with $c = 1$, and
	Theorem~\ref{thm:asymp_norm_soft_pen} with $c = 1/2$, and thus the
	proofs apply. We refer the reader to \citet{ogden:2021} for
	approximation errors to the log-likelihood in clustered GLMMs using
	Laplace's method, \citet{ogden:2017} for approximation errors to the
	gradient of the log-likelihood with an example for an intercept-only
	Bernoulli-response GLMM, \citet{stringer:2022} for approximation
	errors to the log-likelihood in clustered GLMMs using adaptive
	Gauss-Hermite quadrature and \citet{jin+andersson:2020} for general
	approximation errors for adaptive Gauss-Hermite quadrature.
	
	\section{Bound on the gradient of the logarithm of the Jeffreys' prior}
	\begin{theorem}[Bound on the partial derivative of the logarithm of Jeffreys' prior]
		\label{thm:jeffrey_deriv_bound}
		Let $\bb X $ be the $n\times p$ full column rank matrix defined in Section~\ref{sec:bern_GLMMs}, and $\bb W$ a block-diagonal matrix with blocks $\bW_i$ as defined in Section~\ref{sec:composite_penalty} ($i=1,\ldots,k$). Then 
		\[        
		\left|\frac{\partial }{\partial  \beta_s}\log \det(\bb X^\top\bb W\bb X)\right| \leq p\underset{1\leq t\leq n}{\max} |x_{ts}|, 
		\]
		where $x_{ts}$ denotes the $t$th element in the $s$th column of $\bX$.
	\end{theorem}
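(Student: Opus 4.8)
The plan is to differentiate the log-determinant directly with Jacobi's formula and then exploit the mean--variance relationship of the logistic model together with the fact that the hat values of the implied weighted least-squares problem sum to $p$. The block-diagonal structure of $\bW$ plays no role beyond bookkeeping: stacking all clusters, write the rows of the $n\times p$ matrix $\bX$ as $\bx_1^\top,\dots,\bx_n^\top$, so that $x_{ts}$ is the $s$th entry of $\bx_t$, and set $\eta_t=\bx_t^\top\bbeta$, $\mu_t=\exp(\eta_t)/\{1+\exp(\eta_t)\}$, and $w_t=\mu_t(1-\mu_t)$. Then $\bW=\mathrm{diag}(w_1,\dots,w_n)$ and $M(\bbeta):=\bX^\top\bW\bX=\sum_{t=1}^{n}w_t\,\bx_t\bx_t^\top$; since $\bX$ has full column rank and $w_t>0$ for every finite $\bbeta$, $M(\bbeta)$ is positive definite, hence invertible.

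First I would apply the identity $\partial_{\beta_s}\log\det M=\trace\!\left(M^{-1}\partial_{\beta_s}M\right)$, with $\partial_{\beta_s}M(\bbeta)=\sum_{t=1}^{n}(\partial w_t/\partial\beta_s)\,\bx_t\bx_t^\top$. Next I would compute $\partial w_t/\partial\beta_s$ by the chain rule: for the logistic link $d\mu_t/d\eta_t=\mu_t(1-\mu_t)=w_t$, so $dw_t/d\eta_t=(1-2\mu_t)w_t$, and since $\partial\eta_t/\partial\beta_s=x_{ts}$ we obtain $\partial w_t/\partial\beta_s=(1-2\mu_t)\,w_t\,x_{ts}$. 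Substituting and using $\trace(M^{-1}\bx_t\bx_t^\top)=\bx_t^\top M^{-1}\bx_t$ gives
\[
\frac{\partial}{\partial\beta_s}\log\det\!\left(\bX^\top\bW\bX\right)=\sum_{t=1}^{n}(1-2\mu_t)\,x_{ts}\,h_t,\qquad h_t:=w_t\,\bx_t^\top M(\bbeta)^{-1}\bx_t .
\]

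The final step is to bound this sum. Each $h_t\ge 0$, because $w_t>0$ and $M(\bbeta)^{-1}$ is positive definite, and
\[
\sum_{t=1}^{n}h_t=\trace\!\left(M(\bbeta)^{-1}\sum_{t=1}^{n}w_t\bx_t\bx_t^\top\right)=\trace\!\left(M(\bbeta)^{-1}M(\bbeta)\right)=p .
\]
Since $\mu_t\in(0,1)$ we have $|1-2\mu_t|\le 1$, so the triangle inequality yields
\[
\left|\frac{\partial}{\partial\beta_s}\log\det\!\left(\bX^\top\bW\bX\right)\right|\le\sum_{t=1}^{n}|x_{ts}|\,h_t\le\left(\max_{1\le t\le n}|x_{ts}|\right)\sum_{t=1}^{n}h_t=p\max_{1\le t\le n}|x_{ts}| ,
\]
which is the claimed inequality, with a constant that does not depend on $\bbeta$.

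There is no serious obstacle in this argument; the two points that require care are the logistic mean--variance identity $dw_t/d\eta_t=(1-2\mu_t)w_t$, which isolates the bounded factor $1-2\mu_t$, and the observation that the remaining weights $h_t$ are exactly the leverages of the weighted design, so that $\sum_t h_t=\trace(I_p)=p$. Everything else is routine differentiation and the triangle inequality.
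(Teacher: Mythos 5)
Your proof is correct and follows essentially the same route as the paper's: Jacobi's formula for the log-determinant, the logistic identity $dw_t/d\eta_t=(1-2\mu_t)w_t$ to isolate the bounded factor, and the fact that the nonnegative leverages $h_t$ (which are exactly the diagonal entries of the paper's projection matrix $\bb X(\bb X^\top\bb W\bb X)^{-1}\bb X^\top\bb W$) sum to $p$. Your derivation of $\sum_t h_t = \trace(M^{-1}M)=p$ is marginally more direct than the paper's appeal to the rank of an idempotent matrix, but the argument is the same.
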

	\begin{proof}
		We shall find it notationally convenient to neglect the block-structure of $\bX$ and refer to the $t$th element of the $s$th column of $\bX$ by $x_{ts}$ and define $\mu_t^{(f)}(\bbeta) = \exp(\eta_t^{(f)}(\bbeta))/(1+\exp(\eta_t^{(f)}(\bbeta)))$ for $\eta_t^{(f)}(\bbeta) = \bx_t^\top\bbeta$, and where $\bx_t^\top$ is the $t$th row of $\bX$. It is noted without proof that 
		\[
		\left|\frac{\partial }{\partial \beta_s}\log \textrm{det}(\bb X^\top\bb W\bb X)\right|  = \text{tr}\left( (\bb X^\top\bb W\bb X)^{-1}\bb X^\top\bb W\widetilde{\bb W}_s\bb X \right) \,, 
		\]
		where $\widetilde{\bb W}_s$ is a diagonal matrix with main-diagonal entries $\widetilde{w}^{(s)}_{t}= x_{ts}(1-2\mu_t^{(f)}(\bb\beta))$. 
		Now by the cyclical property of the trace, it follows that 
		\[
		\text{tr}\left( (\bb X^\top\bb W\bb X)^{-1}\bb X^\top\bb W\widetilde{\bb W}_s\bb X \right) =\text{tr}\left( \bb X(\bb X^\top\bb W\bb X)^{-1}\bb X^\top\bb W\widetilde{\bb W}_s \right) \,.
		\]
		For notational brevity, denote the projection matrix $\bb X(\bb X^\top\bb W\bb X)^{-1}\bb X^\top\bb W$ by $\bb P$. Since $\widetilde{\bb W}_s$ is a diagonal matrix, one gets that 
		\begin{align*} 
		\left| \text{tr}\left( \bb X(\bb X^\top\bb W\bb X)^{-1}\bb X^\top\bb W\widetilde{\bb W}_s \right) \right| &=  \left| \sum_{t=1}^{n}\widetilde{w}_t^{(s)}[\bb P]_{tt} \right| \\ 
		&\leq \sum_{t=1}^{n}\left|\widetilde{w}_t^{(s)}[\bb P]_{tt} \right| \\ 
		&= \sum_{t=1}^{n}\left|\widetilde{w}_t^{(s)}\right|[\bb P]_{tt}  \\ 
		&\leq \underset{1\leq t \leq n}{\max}\; \left|\widetilde{w}_t^{(s)}\right| \sum_{t=1}^{n}[\bb P]_{tt}  \\ 
		&=p \underset{1\leq t \leq n}{\max}\; \left|\widetilde{w}_t^{(s)}\right|\\ 
		&=p \underset{1\leq t \leq n}{\max}\; \left|x_{ts}(1-2\mu_t^{(f)}(\bb\beta))\right| \\ 
		&\leq p\underset{1\leq t \leq n}{\max}\;\left|x_{ts}\right|  \,.
		\end{align*}
		Here the second line is due to the triangle inequality. The third line follows by nonnegativity of the main-diagonal elements of $\bb P$. To see this, note that $\bb X^\top\bb W\bb X$ is positive definite as $\bb X$ has full column rank and $\bb W$ is a diagonal matrix with positive entries. It thus follows that $(\bb X^\top\bb W\bb X)^{-1}$ is positive definite. Hence for any $\bb y \in \Re^{n}, \|\bb y\|_2\neq 0$, $\bb y^\top\bb X(\bb X^\top\bb W\bb X)^{-1}\bb X \bb y = \tilde{\bb y}^\top(\bb X^\top\bb W\bb X)^{-1} \tilde{\bb y} \geq 0$, for $\tilde{\by} = \bX \by$, so that $\bb X(\bb X^\top\bb W\bb X)^{-1}\bb X$ is positive semi-definite. It is well known that the main diagonal entries of a positive semi-definite matrix are nonnegative. Hence, as $\bb W$ is a diagonal matrix with nonnegative diagonal entries it follows that the main diagonal entries of $\bb P$, which are the elementwise product of the diagonals of $\bb X(\bb X^\top\bb W\bb X)^{-1}\bb X$ and $\bb W$ are nonnegative. The fifth line follows since $\bb P$ is an idempotent matrix of rank $p$, and the fact that the trace of an idempotent matrix equals its rank \citep[Corollary 10.2.2]{harville:1998}. The fact that $\bb P$ has rank $p$ follows from the assumption that $\bb X$ has full column rank and since $\bb W$ is invertible for any $\bb\beta \in \Re^p,\bb X\in \Re^{n \times p}$ by construction and is a standard result in linear algebra (see for example \cite{magnus+neudecker:2019}, Chapter 1.7).  The last line follows since $\mu_t^{(f)}(\bb\beta) \in (0,1)$. 
		
	\end{proof}	
	
	\section*{Declarations} 
	
	For the purpose of open access, the author has applied a Creative Commons Attribution (CC BY) licence to any Author Accepted Manuscript version arising from this submission.
	
	\subsection*{Competing interests}
	The authors have no relevant financial or non-financial interests to disclose.

	\bibliographystyle{chicago}
	\bibliography{softpen}

\includepdf[pages=-]{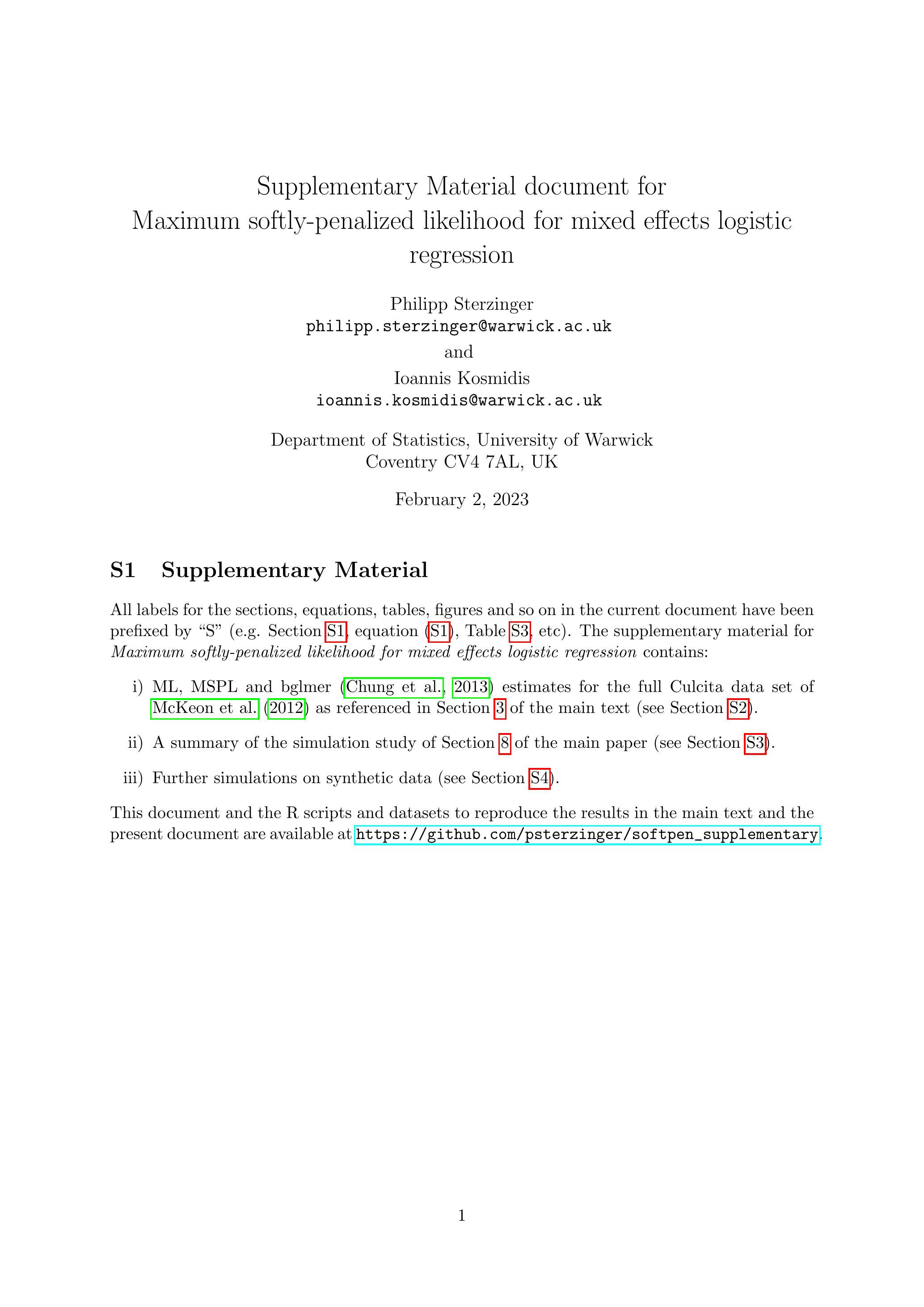}
\end{document}